\newtheorem{fact}{Fact}
\begin{document}

\title{Reconfiguration of Dominating Sets}
 	\author{
		Akira Suzuki\inst{1}\thanks{Research supported by
  JSPS Grant-in-Aid for Scientific Research, Grant Number 24.3660.} \and
        Amer E. Mouawad\inst{2}$^{\star\star}$ \and
        Naomi Nishimura\inst{2}\thanks{Research supported by
  the Natural Science and Engineering Research Council of Canada.}
}
 \institute{Graduate School of Information Sciences, Tohoku University\\
 Aoba-yama 6-6-05, Aoba-ku, Sendai, 980-8579, Japan. \\
 	\email{a.suzuki@ecei.tohoku.ac.jp}
\and
David R. Cheriton School of Computer Science\\
University of Waterloo, Waterloo, Ontario, Canada.\\
\email{\{aabdomou, nishi\}@uwaterloo.ca}
 }
\maketitle

\begin{abstract}
  We explore a reconfiguration version of the dominating set problem,
  where a dominating set in a graph $G$ is a set $S$ of vertices such
  that each vertex is either in $S$ or has a neighbour in $S$.  In a
  reconfiguration problem, the goal is to determine whether there
  exists a sequence of feasible solutions connecting given feasible
  solutions $s$ and $t$ such that
  each pair of consecutive solutions is adjacent according to a
  specified adjacency relation. Two dominating sets
  are adjacent if one can be formed from the other by the addition or
  deletion of a single vertex.

  For various values of $k$, we consider properties of $D_k(G)$, the
  graph consisting of a vertex for each dominating set of size at most
  $k$ and edges specified by the adjacency relation.  Addressing an
  open question posed by Haas and Seyffarth, we demonstrate that
  $D_{\Gamma(G)+1}(G)$ is not necessarily connected, for $\Gamma(G)$ the maximum
  cardinality of a minimal dominating set in $G$. The result holds
  even when graphs are constrained to be planar, of bounded
  tree-width, or $b$-partite for $b \ge 3$.  Moreover, we construct an
  infinite family of graphs such that $D_{\gamma(G)+1}(G)$ has
  exponential diameter, for $\gamma(G)$ the minimum size of a
  dominating set.  On the positive side, we show that
  $D_{n-m}(G)$ is connected and of linear diameter for any graph $G$
  on $n$ vertices having at least $m+1$ independent edges.
\end{abstract}

\section{Introduction}
The {\em reconfiguration version} of a problem determines whether it
is possible to transform one feasible solution $s$ into a {\em target}
feasible solution $t$ in a step-by-step manner (a {\em reconfiguration})
such that each intermediate solution is also feasible.  
The study of such problems has received considerable attention
in recent literature \cite{FHHH11,GKMP09,IDHPSUU11,IKOZ12,KMM11} and is 
interesting for a variety of reasons. From an algorithmic standpoint, reconfiguration 
models dynamic situations in which we seek to transform a solution into
a more desirable one, maintaining feasibility during the process. Reconfiguration
also models questions of evolution; it can represent the evolution of a genotype
where only individual mutations are allowed and all genotypes must satisfy a
certain fitness threshold, i.e. be feasible.
Moreover, the study of reconfiguration yields insights into the
structure of the solution space of the underlying problem, crucial for the design of efficient
algorithms. In fact, one of the initial motivations behind such questions was to study
the performance of heuristics \cite{GKMP09} and random sampling methods \cite{CVJ08},
where connectivity and other properties of the solution space play a crucial role.
Even though reconfiguration gained popularity in the last decade or so, 
the notion of exploring the solution space of a given problem has been previously 
considered in numerous settings. One such example is the work of
Mayr and Plaxton~\cite{MP92}, where the authors consider the problem 
of transforming one minimum spanning tree of a weighted graph into another by 
a sequence of edge swaps.

Some of the problems for which the reconfiguration version 
has been studied include vertex
colouring~\cite{BB13temp,BC09,CVJ08,CVJ09,CVJ11}, list
edge-colouring~\cite{IKD12}, list L(2,1)-labeling~\cite{IKOZ12}, block puzzles~\cite{HD05}, independent
set~\cite{HD05,IDHPSUU11}, clique, set cover, integer programming,
matching, spanning tree, matroid bases~\cite{IDHPSUU11},
satisfiability~\cite{GKMP09}, shortest path~\cite{B12,KMM11}, subset sum~\cite{ID11},
dominating set~\cite{HS12,oursipec}, odd cycle transversal, feedback
vertex set, and hitting set~\cite{oursipec}.
For most $\mathbf{NP}$-complete problems, the reconfiguration version has
been shown to be $\mathbf{PSPACE}$-complete \cite{IDHPSUU11,IKD12,KMM12}, while
for some problems in $\mathbf{P}$, the reconfiguration question could be either in
$\mathbf{P}$ \cite{IDHPSUU11} or $\mathbf{PSPACE}$-complete \cite{B12}.

The problem of transforming input $s$ into input $t$ can be viewed as
the problem of determining if there is a path from $s$ to $t$ in a
graph representing feasible solutions.  
Such a path is called a {\em reconfiguration sequence}.
For the problem of dominating
set, the {\em $k$-dominating graph}, defined formally in
Section~\ref{sec-prelims}, consists of a node for each feasible
solution and an edge for each pair of solutions that differ by a
single vertex. Finding an $s$-$t$ path in this graph has been shown to be
$\mathbf{W[2]}$-hard~\cite{oursipec}, and hence not likely to yield even a
fixed-parameter tractable algorithm~\cite{DF97}.

Although having received less attention than the $s$-$t$ path problem,
other characteristics of the solution graph have been studied.
Determining the diameter of the reconfiguration graph will result in
an upper bound on the length of any reconfiguration sequence.  For a
problem such as colouring, one can determine the {\em mixing number},
the minimum number of colours needed to ensure that the entire graph
is connected; such a number has been obtained for the problem of list
edge-colouring on trees~\cite{IKD12}.

In previous work on reconfiguration of dominating sets, Haas and
Seyffarth~\cite{HS12} considered the connectivity of the graph of
solutions of size at most $k$, for various values of $k$ relative to
$n$, the number of vertices in the input graph $G$. They demonstrated
that the graph is connected when $k=n-1$ and $G$ has at least two
independent edges, or when $k$ is one greater than the maximum
cardinality of a minimal dominating set and $G$ is non-trivially
bipartite or chordal.  They left as an open question, answered
negatively here, whether the latter results could be extended to all
graphs.

In this paper we extend previous work by showing in
Section~\ref{sec-independent} that the solution graph is connected and of
linear diameter for $k = n-m$ for any input graph with at least $m+1$
independent edges, for any nonnegative integer $m$.
In Section~\ref{sec-counter}, we give a series of counterexamples
demonstrating that
$D_{\Gamma(G)+1}(G)$ is not guaranteed to be connected for planar graphs, graphs of
bounded treewidth, or $b$-partite graphs for $b \ge 3$.
In Section~\ref{sec-diameter}, we pose and
answer a question about the diameter of $D_{\gamma(G)+1}(G)$ by
showing that there is an infinite family of graphs of exponential
diameter.

\section{Preliminaries}\label{sec-prelims}
We assume that each $G$ is a simple, undirected graph on $n$ vertices
with vertex set $V(G)$ and edge set $E(G)$.
The {\em diameter} of $G$ is the maximum over all pairs
of vertices $u$ and $v$ in $V(G)$ of the length of the shortest path
between $u$ and $v$.

A set $S \subseteq V(G)$
is a {\em dominating set} of $G$ if and only if every vertex in $V(G)
\setminus S$ is adjacent to a vertex in $S$.
The minimum cardinality of any dominating set of $G$ is denoted by
$\gamma(G)$. Similarly,
$\Gamma(G)$ is the maximum cardinality of any minimal
dominating set in $G$.

For a vertex $u \in V(G)$ and a dominating set $S$ of $G$, we say $u$
is {\em dominated} by $v \in S$ if $u \notin S$ and $u$ is adjacent to
$v$. For a vertex $v$ in a dominating set $S$,
a {\em private neighbour} of $v$ is a vertex
dominated by $v$ and not dominated by any other vertex in $S$;
the {\em private neighbourhood of $v$} is the set of private neighbours of $v$.
A vertex $v$ in a
dominating set $S$ is {\em deletable} if $S \setminus \{ v \}$ is also
a dominating set of $G$.

\begin{fact}\label{fact-deletable}
A vertex $v$ is deletable if and only if $v$ has at least one neighbour in
$S$ and $v$ has no private neighbour.
\end{fact}

Given a graph $G$ and a positive integer $k$, we consider the {\em
$k$-dominating graph} of $G$, $D_k(G)$, such that each vertex in
$V(D_k(G))$ corresponds to a dominating set of $G$ of cardinality at
most $k$. Two vertices are adjacent in $D_k(G)$ if and only if the
corresponding dominating sets differ by either the addition or the
deletion of a single vertex; each such operation is a {\em reconfiguration step}.
Formally, if $A$ and $B$ are dominating
sets of $G$ of cardinality at most $k$, then there exists an edge
between $A$ and $B$ if and only if there exists a vertex $u \in V(G)$
such that $(A \setminus B) \cup (B \setminus A) = \{ u \}$.
We refer to vertices in $G$ using lower case letters (e.g. $u, v$) and to the vertices in
$D_k(G)$, and by extension their associated dominating sets, using
upper case letters (e.g. $A, B$).
We write $A \leftrightarrow B$ if there exists a path in $D_k(G)$
joining $A$ and $B$.  The following fact is a consequence of our
ability to add vertices as needed to form $B$ from $A$.

\begin{fact}\label{fact-subset}
If $A \subseteq B$, then $A
\leftrightarrow B$ and $B \leftrightarrow A$.
\end{fact}

\section{Graphs with $m+1$ independent edges}\label{sec-independent}
\begin{theorem}\label{th_domi_inde_01}
For any nonnegative integer $m$,
if $G$ has at least $m+1$ independent edges,
then $D_{n-m}(G)$ is connected for $n = |V(G)|$.
\end{theorem}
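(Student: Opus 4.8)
The plan is to fix a matching $M=\{\,\{a_0,b_0\},\dots,\{a_m,b_m\}\,\}$ of size $m+1$ guaranteed by the hypothesis, single out the canonical dominating set $T=V(G)\setminus\{b_1,\dots,b_m\}$ of size exactly $n-m$ (it is dominating, since each removed $b_i$ is dominated by $a_i\in T$), and show that every dominating set $A$ with $|A|\le n-m$ satisfies $A\leftrightarrow T$. Connectivity then follows, since $A\leftrightarrow T\leftrightarrow B$ for any two vertices $A,B$ of $D_{n-m}(G)$. By Fact~\ref{fact-subset} it suffices to reach \emph{some} dominating set contained in $T$: once $A\leftrightarrow A^\ast$ with $A^\ast\subseteq T$, Fact~\ref{fact-subset} gives $A^\ast\leftrightarrow T$. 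Hence the whole task reduces to eliminating the ``bad'' vertices $A\cap\{b_1,\dots,b_m\}$ while staying dominating and within the cardinality bound $n-m$.

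The technical heart is the following lemma, which I would prove first: \emph{if $G$ has a matching of size $m+1$, then every dominating set $S$ with $|S|\ge n-m$ contains a deletable vertex} (equivalently, $\Gamma(G)\le n-m-1$). Suppose $S$ is a counterexample and write $X=V(G)\setminus S$, so $t:=|X|\le m$. Classify the matching edges as lying inside $S$ ($c_1$ of them), crossing the cut ($c_{2a}$), or lying inside $X$ ($c_{2b}$); thus $c_1+c_{2a}+c_{2b}=m+1$. The $2c_1$ endpoints of inside-$S$ edges each have a neighbour in $S$ (their matching partner), so by Fact~\ref{fact-deletable} each has a private neighbour in $X$, and these are pairwise distinct. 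A short case check shows such a private neighbour cannot be the $S$-endpoint of a crossing edge (that endpoint would be a second dominator), so the matched private neighbours can only sit on inside-$X$ edges; counting the vertices of $X$ then yields $2c_1+c_{2a}\le t$, while disjointness of the matching gives $2c_{2b}+c_{2a}\le t$. Adding these two inequalities gives $2(m+1)\le 2t$, i.e.\ $t\ge m+1$, contradicting $t\le m$. This lemma guarantees that whenever a set of size exactly $n-m$ is reached, some vertex can be deleted, so the reconfiguration never deadlocks at the cap.

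I would then remove the bad vertices one at a time, decreasing $|A\cap\{b_1,\dots,b_m\}|$ and recursing. To delete a bad vertex $b_i$ from the current set $S$, I first guarantee that its partner $a_i\in S$ and then ``resolve'' each private neighbour $w$ of $b_i$: if $w\in T$ I add $w$ itself, and if $w=b_k$ is bad I instead add $a_k\in T$, which dominates $w$. After these additions (all of vertices of $T$), $b_i$ has a neighbour in $S$ and no private neighbour, so by Fact~\ref{fact-deletable} it is deletable and I remove it. The clean special case making this painless near the end is that \emph{once $T\subseteq S$ every bad vertex is automatically deletable}, since every other vertex outside $S$ is some $b_k$ dominated by $a_k\in T\subseteq S$. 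Throughout, whenever an addition would exceed $n-m$, the lemma above supplies a deletable vertex to make room first, so every intermediate set is a dominating set of size at most $n-m$.

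The main obstacle is exactly the interplay between the two constraints ``remain dominating'' and ``stay within size $n-m$'': we can never simultaneously hold all of $T$ together with a surplus bad vertex, so removals must be interleaved with the partial assembly of $T$ rather than postponed, and one must check that this interleaving always makes progress rather than cycling. This is precisely where the $m+1$ independent edges are used: they provide $m$ units of slack, formalized by the lemma, ensuring that a deletable vertex is always available at the cap and that a bad vertex's private neighbours can always be absorbed by partners. Finally, the diameter bound is immediate from the construction: each vertex of $G$ changes membership a bounded number of times, so the reconfiguration $A\leftrightarrow T$ has length $\Oh(n)$, and hence $D_{n-m}(G)$ has diameter $\Oh(n)$.
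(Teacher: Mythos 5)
Your high-level skeleton is sound, and your key lemma is correct: the counting argument showing that a matching of size $m+1$ forces every dominating set of size at least $n-m$ to contain a deletable vertex (equivalently, $\Gamma(G)\le n-m-1$) is valid --- the two inequalities $2c_1+c_{2a}\le t$ and $2c_{2b}+c_{2a}\le t$ do sum to the contradiction $m+1\le t\le m$, and the reduction via Fact~\ref{fact-subset} to reaching \emph{some} dominating subset of $T$ is fine. The genuine gap is exactly at the point you yourself flag as ``the main obstacle'': you never prove that the interleaved add/delete process makes progress, and nothing you have established implies it. Your lemma is location-agnostic: it guarantees that \emph{some} deletable vertex exists at the cap, but gives no control over \emph{which}. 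When $|S|=n-m$ and you need room to add the partner $a_i$ or a resolver, the deletion you are forced to perform can (i) remove the very vertex $a_i$ or resolver you just added, and (ii) create \emph{new} private neighbours of the targeted bad vertex $b_i$: the deleted vertex $v$ itself becomes one whenever its only remaining dominator is $b_i$, and so does any vertex outside $S$ that was dominated only by $v$ and $b_i$. Under your stated rule (``if $w\in T$ I add $w$ itself''), this yields an immediate cycle: delete $v$, $v$ becomes a private neighbour of $b_i$, add $v$ back, repeat. Your ``clean special case'' ($T\subseteq S$) cannot rescue the end-game either, since holding $T$ plus a bad vertex violates the cap, as you note. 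So the assertion that private neighbours ``can always be absorbed by partners'' is precisely the unproven claim, and it is the heart of the theorem rather than a checkable detail; likewise your closing $\Oh(n)$-diameter remark presupposes the missing termination bound.

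This is what the paper's proof is organized to deliver. Its counting arguments are location-specific: stage 1 exhibits a deletable vertex inside an \emph{even} edge, so deleting it leaves the number of clean edges unchanged while the subsequent addition of a clean-edge endpoint strictly decreases it; stage 2 exhibits a deletable vertex on the $W$-side of an even edge, so deleting it strictly increases the number of $u$-odd edges while the subsequent addition (an outsider or the $u$-endpoint of a $w$-odd edge) never decreases it. Each stage therefore has a strictly monotone potential, which is what rules out cycling; your bad-vertex count is monotone non-increasing but nothing forces it to strictly decrease. To repair your argument you would need an analogous refinement --- for instance, your own counting in fact shows that some deletable vertex can be found among endpoints of matching edges lying entirely inside $S$, and one would then have to argue which such endpoint to delete and which vertex to add so that a well-chosen potential strictly improves. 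Carried out, that essentially reconstructs the paper's staged proof.
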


\begin{proof}
For $G$ a graph with $m+1$ independent edges $I = \{ \{u_i,w_i\}
\mid 0 \le i \le m\}$, we define $U = \{u_i \mid 0 \le i \le m\}$, $W = \{w_i
\mid 0 \le i \le m\}$, and the set of {\em outsiders} $R = V(G) \setminus (U \cup W)$.

Using any dominating set $S$ of $G$, we can partition $I$ as follows: edge
$\{u_i,w_i\}$, $0 \le i \le m$, is {\em clean} if neither $u_i$ nor
$w_i$ is in $S$, {\em $u$-odd} if $u_i \in S$ but $w_i \notin S$,
{\em $w$-odd} if $w_i \in S$ but $u_i \notin S$, {\em odd} if
$\{u_i,w_i\}$ is $u$-odd or $w$-odd, and {\em even} if $\{u_i,w_i\}
\subseteq S$.  We use $\textrm{clean}(S)$ and $\textrm{odd}(S)$,
respectively, to denote the numbers of clean and odd edges
for $S$. Similarly, we let $\textrm{$u$-odd}(S)$ and $\textrm{$w$-odd}(S)$
denote the numbers of $u$-odd and $w$-odd edges for $S$.  In the
example graph shown in Figure~\ref{figthm1}, $m + 1 = 7$ and $R =
\emptyset$. There is a single clean edge, namely $\{u_1, w_1\}$, three
$w$-odd edges, two $u$-odd edges, and a single even edge.

\begin{figure}[h]
\begin{centering}
\centerline{\includegraphics[scale=0.3]{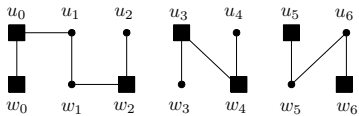}}
\end{centering}
\caption{Vertices in $S$ are marked with squares}
\label{figthm1}
\end{figure}

It suffices to show that for $S$ an arbitrary dominating set of $G$
such that $|S| \le n-m$, $S \leftrightarrow N$ for $N=V(G) \setminus
W$; $N$ is clearly a dominating set as each vertex $w_i \in W = V(G)
\setminus N$ is dominated by $u_i$. By Fact~\ref{fact-subset}, for
$S'$ a dominating set of $G$ such that $S' \supseteq S$ and $|S'| =
n-m$, since $S'$ is a superset of $S$, then $S \leftrightarrow S'$. The
reconfiguration from $S'$ to $N$ can be broken into three stages.
In the first stage, for a dominating set $S_0$ with no clean edges, we show $S'
\leftrightarrow S_0$ by
repeatedly decrementing the number of clean edges
($u_i$ or $w_i$ is added to the dominating set for some $0 \leq i \leq m$).
In the second stage,
for $T_m$ with $m$ $u$-odd edges and one even edge, we show
$S_0 \leftrightarrow T_m$ by repeatedly incrementing the number of $u$-odd edges.
Finally, we observe that deleting the single
remaining element in $T_m \cap W$ yields $T_m \leftrightarrow N$.

In stage 1, for $x = \textrm{clean}(S')$, we show that $S' = S_x
\leftrightarrow S_{x-1} \leftrightarrow S_{x-2} \leftrightarrow \ldots
\leftrightarrow S_{0} $ where for each $0 \le j \le x$, $S_j$ is a
dominating set of $G$ such that $|S_j|=n-m$ and $\textrm{clean}(S_j) =
j$. To show that $S_a \leftrightarrow S_{a-1}$ for arbitrary $1 \le a
\le x$, we prove that there is a deletable vertex
in some even edge and hence a vertex in a clean edge can be
added in the next reconfiguration step.  For $b = \textrm{odd}(S_a)$,
the set $E$ of vertices in even edges is of size
$2((m+1) - a - b)$. Since each vertex in $E$ has a neighbour in $S_a$, if 
 at least one vertex in $E$ does not have a private neighbour, then 
$E$ contains a deletable vertex (Fact~\ref{fact-deletable}).

The $m$ vertices in $V(G) \setminus S_a$ are the only possible
candidates to be private neighbours.
Of these, the $b$ vertices of $V(G) \setminus S_a$ in
odd edges cannot be private neighbours of vertices in $E$, as
each is the neighbour of a vertex in $S_a \setminus E$
(the other endpoint of the edge). The number of remaining candidates, $m-b$,
is smaller than the number of vertices in
$E$; $m \ge 2a + b$ as the vertices of
$V(G) \setminus S_a$ must contain both endpoints of any clean
edge and one endpoint for any odd edge.
Hence, there exists at least one deletable vertex in $E$.  When
we delete such a vertex and add an arbitrary endpoint of a
clean edge, the clean edge becomes an odd edge and the
number of clean edges decreases. We can therefore reconfigure
from $S_a$ to the desired dominating set, and by applying the
same argument $a$ times, to $S_0$.

In the second stage we show that
for $y = \textrm{$u$-odd}(S_0)$, $S_0 = T_y \leftrightarrow T_{y+1} \leftrightarrow T_{y+2} \leftrightarrow \ldots \leftrightarrow T_{m}$
where for each $y \le j \le m$, $T_j$ is a dominating set of $G$ such that
$|T_j| = n-m$, $\textrm{clean}(T_j) = 0$, and
$\textrm{$u$-odd}(T_j) = j$.  To show
that  $T_c \leftrightarrow T_{c+1}$ for arbitrary $y \le c \le m-1$, we use
a counting argument to find a vertex in an even
edge that is in $W$ and deletable; in one reconfiguration step the vertex is
deleted, increasing the number of $u$-odd edges, and in the
next reconfiguration step an arbitrary vertex in $R$ or in a $w$-odd edge is
added to the dominating set.
We let $d = \textrm{$w$-odd}(T_c)$ (i.e. the number of $w$-odd edges
for $T_c$) and observe that since there are $c$ $u$-odd edges,
$d$ $w$-odd edges, and no clean edges, there exist $(m+1)-c-d$
even edges.  We define $E_w$ to be the set of vertices in $W$
that are in the even edges, and observe that each has a
neighbour in $T_c$; a
vertex in $E_w$ will be deletable if it
does not have a private neighbour.

Of the $m$ vertices in $V(G)\backslash T_c$, only those in $R$
are candidates to be private neighbours 
of vertices in $E_w$, as each vertex in an
odd edge has 
a neighbour in $T_c$.  As there are $c$ $u$-odd edges and $d$ $w$-odd edges,
the total number of vertices in $R \cap V(G) \backslash T_c$
is $m - c - d$.  Since this is smaller than the number of vertices
in $E_w$, at least one vertex in $E_w$ must be deletable.
When we delete such a vertex from $T_c$ and in the next step
add an arbitrary vertex from the outsiders or $w$-odd edges,
the even edge becomes a $u$-odd edge and the number of
$u$-odd edges increases.  
Note that we can always find such a vertex 
since there are $m - c - d$ outsiders, $d$ $w$-odd edges, and $c \leq m - 1$.
Hence, we can reconfigure from
$T_c$ to $T_{c+1}$, and by $m - c$ repetitions, to $T_m$.
\qed
\end{proof}

Corollary~\ref{cor} results from the length of the
reconfiguration sequence formed in Theorem~\ref{th_domi_inde_01};
reconfiguring to $S'$ can be achieved in at most $n-m$ steps, and stages 1
and 2 require at most $2m$ steps each, as $m \in O(n)$ is at most the
numbers of clean and $u$-odd edges.
Theorem~\ref{th_domi_inde_02} shows that Theorem~\ref{th_domi_inde_01} is tight.

\begin{corollary}\label{cor}
The diameter of $D_{n-m}(G)$ is in $O(n)$ for $G$ a graph with $m+1$ independent edges.
\end{corollary}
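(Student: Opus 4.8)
The plan is to extract a quantitative bound from the reconfiguration sequence already built in the proof of Theorem~\ref{th_domi_inde_01}, and then to pass from the eccentricity of the fixed target $N = V(G)\setminus W$ to the diameter via the triangle inequality. The key observation is that Theorem~\ref{th_domi_inde_01} does not merely assert connectivity: its proof exhibits, for every dominating set $S$ with $|S| \le n-m$, an explicit short walk from $S$ to $N$ in $D_{n-m}(G)$.

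First I would count the reconfiguration steps along this walk. Augmenting $S$ to a superset $S'$ of size $n-m$ adds one vertex per step, hence uses at most $n-m$ steps. Stage~1 reduces the number of clean edges to zero; since at $S'$ there are only $m$ vertices outside the set and each clean edge consumes two of them, there are at most $m$ clean edges, and each decrement costs two steps (one deletion from an even edge, one addition to a clean edge), for at most $2m$ steps. Stage~2 raises the number of $u$-odd edges to $m$, again at two steps per increment and at most $m$ increments, for at most $2m$ steps. The final deletion of the single vertex of $T_m \cap W$ adds one step. Summing gives a walk of length at most $(n-m) + 2m + 2m + 1 = n + 3m + 1$; since the $m+1$ independent edges occupy $2(m+1) \le n$ vertices we have $m < n/2$, so the distance $d(S,N)$ from $S$ to $N$ is $O(n)$ for every such $S$.

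Finally I would invoke the triangle inequality: for any two dominating sets $A, B$ of cardinality at most $n-m$, routing a path through $N$ gives $d(A,B) \le d(A,N) + d(N,B) = O(n)$, so the diameter of $D_{n-m}(G)$ is at most twice the eccentricity of $N$ and hence $O(n)$.

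I expect no conceptual obstacle here, since connectivity is already guaranteed by Theorem~\ref{th_domi_inde_01}; the only care required is the bookkeeping that charges each change of edge type a constant number of reconfiguration steps and that bounds the counts of clean and $u$-odd edges by $O(m) = O(n)$.
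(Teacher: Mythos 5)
Your proposal is correct and follows essentially the same route as the paper: the paper also bounds the walk to $N$ by counting at most $n-m$ steps to augment to $S'$, at most $2m$ steps in each of stages 1 and 2 (since the numbers of clean and $u$-odd edges are at most $m$), and then concludes the $O(n)$ diameter by routing any pair of dominating sets through the common target $N$. Your write-up merely makes explicit the triangle-inequality step and the bound $m < n/2$ that the paper leaves implicit.
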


\begin{theorem}\label{th_domi_inde_02}
For any nonnegative integer $m$, there exists a graph $G_m$ with $m$ independent edges such that $D_{n - m}(G_m)$ is not connected.
\end{theorem}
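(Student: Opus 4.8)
The plan is to exhibit, for each $m \ge 1$, the explicit graph $G_m$ consisting of exactly $m$ pairwise disjoint edges $\{u_i, w_i\}$, $1 \le i \le m$, and to show that every vertex of $D_{n-m}(G_m)$ is isolated. Here $n = |V(G_m)| = 2m$, and the matching $\{\{u_i,w_i\} \mid 1 \le i \le m\}$ certifies that $G_m$ has exactly $m$ independent edges (it can have no more, since $2m$ vertices admit no larger matching). The degenerate case $m = 0$ I would handle separately or exclude, since a graph with no edges has $V(G)$ as its unique dominating set and hence a trivially connected (single-vertex) dominating graph.

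First I would characterize all dominating sets of $G_m$ of cardinality at most $k = n - m = m$. Since the edges are vertex-disjoint, a set $S$ dominates $G_m$ if and only if $S$ contains at least one endpoint of each edge $\{u_i, w_i\}$; in particular every dominating set has size at least $m$. Combined with the budget $|S| \le m$, this forces $|S| = m$ with exactly one endpoint of each edge in $S$. Hence the vertices of $D_m(G_m)$ are precisely the $2^m$ transversals of the matching, each of cardinality exactly $m$.

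Next I would argue that $D_m(G_m)$ has no edges at all. A single reconfiguration step between two dominating sets changes the cardinality by exactly one, so any neighbour of a size-$m$ solution would have size $m-1$ or $m+1$. A set of size $m+1$ violates the budget $k = m$, while a set of size $m-1$ cannot contain an endpoint of every edge and therefore fails to dominate. Thus no legal reconfiguration step is available from any vertex; every vertex of $D_m(G_m)$ is isolated, and since there are $2^m \ge 2$ of them the graph is disconnected.

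I do not anticipate a serious technical obstacle: the only real work is choosing the construction so that the budget $k = n - m$ coincides exactly with the minimum dominating set size. The argument then reduces to a counting observation, namely that every feasible solution sits at the minimum cardinality $m$ and so can be neither enlarged (budget) nor shrunk (domination), leaving it stranded. This pinpoints the role of the ``$m+1$'' hypothesis in Theorem~\ref{th_domi_inde_01}: it is precisely the one extra independent edge that guaranteed a deletable vertex in even edges, and hence the room to move, which the present graph deliberately lacks.
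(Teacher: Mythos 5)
Your proof is correct for every $m \ge 1$, and it takes a genuinely different route from the paper's. The paper chooses $G_m$ to be a \emph{path} on $2m$ vertices, argues that $\Gamma(G_m) = m$ (any dominating set of size at least $m+1$ contains a vertex all of whose neighbours lie in the set, hence a deletable vertex), and then invokes a lemma of Haas and Seyffarth stating that $D_{\Gamma(G)}(G)$ is never connected. Your construction---the disjoint union of $m$ edges---avoids both $\Gamma$ and the external lemma: you enumerate the entire solution space explicitly (the $2^m$ transversals of the matching, all of cardinality exactly $m = n-m$) and observe that, since a reconfiguration step changes cardinality by one, every vertex of $D_m(G_m)$ is isolated. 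Your approach buys self-containedness and a stronger conclusion ($D_{n-m}(G_m)$ has no edges at all, hence $2^m$ components); the paper's approach buys a \emph{connected} witness graph, which shows the tightness of Theorem~\ref{th_domi_inde_01} is not an artifact of $G_m$ being disconnected, and reuses machinery already cited elsewhere in the paper. Your explicit flagging of $m = 0$ is also warranted, and is handled more carefully than in the paper: for $m=0$ any qualifying graph is edgeless, its unique dominating set is $V(G)$, and more generally $D_n(G)$ is always connected by Fact~\ref{fact-subset}, so the statement really requires $m \ge 1$; the paper's path on $2m = 0$ vertices silently degenerates in the same way.
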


\begin{proof}
Let $G_m$ be a path on $n = 2m$ vertices.
Clearly, $G_m$ has $m$ disjoint edges, $n - m = 2m - m = m$, and $D_{n - m}(G_m) = D_{m}(G_m)$.
We let $S$ be a dominating set of $G_m$ such that $|S| \geq m + 1$.
At least one vertex in $S$ must have all its neighbors in $S$ and is therefore deletable.
It follows that $\Gamma(G_m) = m$ and $D_{n - m}(G_m) = D_{m}(G_m) = D_{\Gamma(G_m)}(G_m)$
which is not connected by the result of Haas and Seyffarth \cite[Lemma~3]{HS12}.
\qed
\end{proof}

\section{$D_{\Gamma(G) + 1}(G)$ may not be connected}\label{sec-counter}
In this section we demonstrate that $D_{\Gamma(G)+1}(G)$ is not
connected for an infinite family of graphs $G_{(d,b)}$ for all
positive integers $b \ge 3$ and $d \ge 2$, where graph $G_{(d,b)}$ is
constructed from $d+1$ cliques of size $b$.  We demonstrate using the
graph $G_{(4,3)}$ as shown in part (a) of Figure~\ref{fig1},
consisting of fifteen vertices partitioned into five cliques of size
3: the {\em outer clique} $C_0$, consisting of the top, left, and
right {\em outer vertices} $o_1$, $o_2$, and $o_3$, and the four {\em
  inner cliques} $C_1$ through $C_4$, ordered from left to right.  We
use $v_{(i,1)}$, $v_{(i,2)}$, and $v_{(i,3)}$ to denote the top, left,
and right vertices in clique $C_i$, $1 \le i \le 4$.  More generally,
a graph $G_{(d,b)}$ has $d+1$ $b$-cliques $C_i$ for $0 \le i \le
d$. The clique $C_0$ consists of outer vertices $o_j$ for $1 \le j \le
b$, and for each inner clique $C_i$, $1 \le i \le d$ and each $1 \le j
\le b$, there exists an edge $\{o_j,v_{(i,j)}\}$.

\begin{figure}[h]
\begin{centering}
\centerline{\includegraphics[scale=0.3]{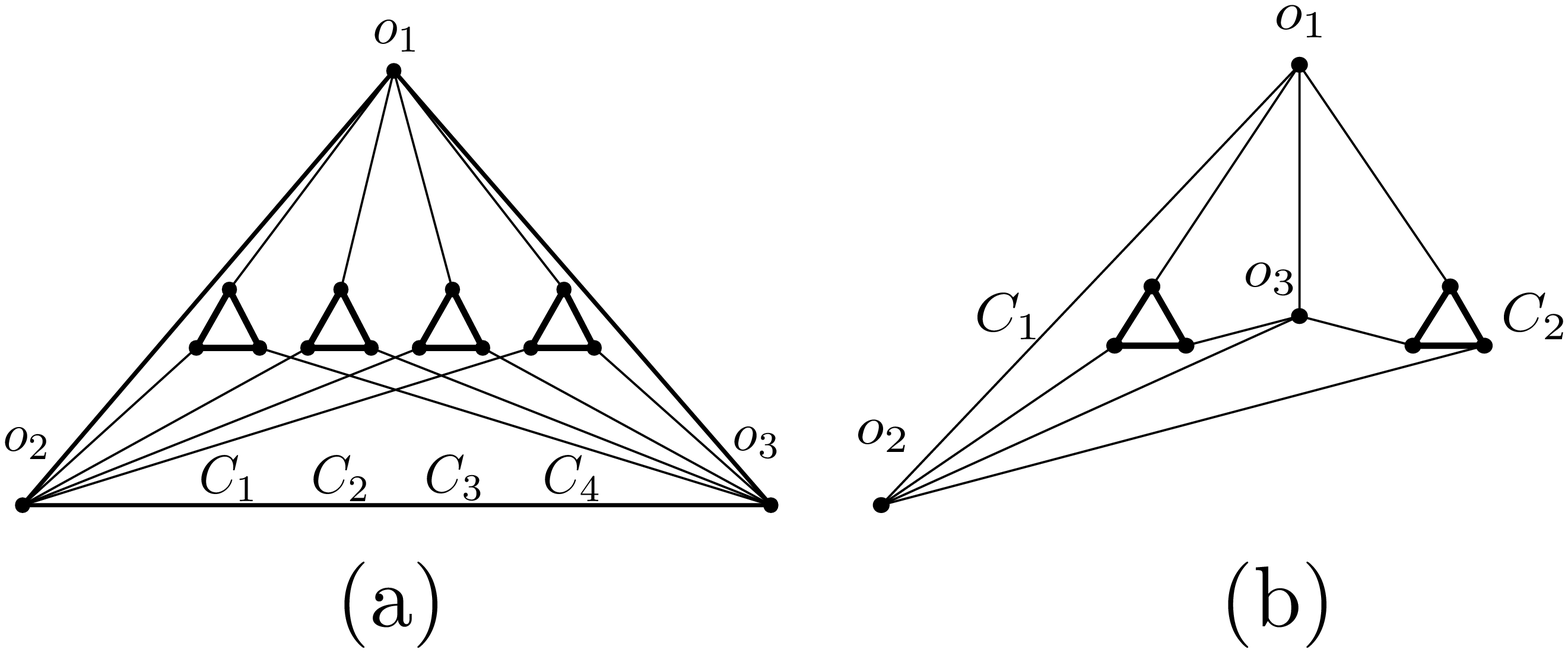}}
\end{centering}
\caption{Counterexamples for (a) general and (b) planar graphs}
\label{fig1}
\end{figure}

For any $1 \le j \le b$ a dominating set does not contain $o_j$, then
the vertices $v_{(i,j)}$ of the inner cliques must be dominated by
vertices in the inner cliques (hence Fact~\ref{obs-cover-inner}). In addition, the outer vertex $o_j$ can
be dominated only by another outer vertex or some vertex $v_{(i,j)}$,
$1 \le i \le d$ (hence Fact~\ref{obs-cover-outer}). 

\begin{fact}\label{obs-cover-inner}
Any dominating set that does not contain all of the outer vertices
must contain at least one vertex from each of the inner cliques.
\end{fact}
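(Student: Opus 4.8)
The plan is to exploit the sparse connection pattern between the outer clique and the inner cliques: each inner vertex $v_{(i,j)}$ has exactly one neighbour outside its own clique $C_i$, namely the outer vertex $o_j$ of matching index. I would start by letting $S$ be a dominating set that does not contain all of the outer vertices, so there is some index $j$ with $1 \le j \le b$ and $o_j \notin S$. I will fix this single $j$ and then argue separately about each inner clique $C_i$, $1 \le i \le d$.

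Fix an arbitrary inner clique $C_i$ and consider its vertex $v_{(i,j)}$. By the construction of $G_{(d,b)}$, the neighbours of $v_{(i,j)}$ are the other vertices $v_{(i,j')}$ of $C_i$ (since $C_i$ is a clique) together with the single outer vertex $o_j$, which is the only edge leaving $C_i$ incident to this vertex. Now either $v_{(i,j)} \in S$, in which case $C_i$ already meets $S$; or $v_{(i,j)} \notin S$, in which case $v_{(i,j)}$ must be dominated by some neighbour in $S$. Since $o_j \notin S$ by our choice, that neighbour cannot be $o_j$, so it must be one of the $v_{(i,j')}$, all of which lie in $C_i$. In either case $S$ contains a vertex of $C_i$. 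As $i$ was arbitrary, $S$ meets every inner clique, which is exactly the claim.

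I do not anticipate a genuine obstacle here; the whole argument rests on the single structural observation that $o_j$ is the unique external neighbour of $v_{(i,j)}$. The only point requiring care is to make explicit that one fixed index $j$ with $o_j \notin S$ serves simultaneously for \emph{all} inner cliques, rather than needing a potentially different missing outer vertex for each clique — this is immediate since the same $v_{(i,j)}$-based analysis applies verbatim to every $i$.
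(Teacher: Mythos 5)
Your proof is correct and matches the paper's own argument: the paper likewise observes that if $o_j \notin S$ then each $v_{(i,j)}$ must be dominated from within its inner clique, since $o_j$ is its only neighbour outside $C_i$. Your write-up just makes explicit the case split ($v_{(i,j)} \in S$ versus dominated from within $C_i$) that the paper leaves implicit.
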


\begin{fact}\label{obs-cover-outer}
Any dominating set that does not contain any outer vertex must contain
at least one vertex of the form $v_{(\cdot,j)}$ for each $1 \le j \le b$.
\end{fact}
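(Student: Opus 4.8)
The plan is to argue separately for each index $j$ by examining the neighbourhood of the corresponding outer vertex $o_j$. Fix an arbitrary $1 \le j \le b$ and let $D$ be a dominating set containing no outer vertex. Since $o_j \notin D$, the set $D$ must contain a neighbour of $o_j$ in order to dominate it; the whole argument reduces to identifying which neighbours of $o_j$ are available.

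First I would enumerate the neighbours of $o_j$ directly from the construction of $G_{(d,b)}$. Because $C_0$ is a clique, $o_j$ is adjacent to every other outer vertex $o_{j'}$ with $j' \neq j$. The only edges leaving $C_0$ are those of the form $\{o_j, v_{(i,j)}\}$ for $1 \le i \le d$, so the remaining neighbours of $o_j$ are exactly the vertices $v_{(1,j)}, \ldots, v_{(d,j)}$. In particular, $o_j$ is \emph{not} adjacent to any $v_{(i,j')}$ with $j' \neq j$, since the outer-to-inner edges preserve the index $j$.

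Next I would combine these two observations. By hypothesis $D$ contains none of the outer vertices, so none of the clique-neighbours $o_{j'}$ of $o_j$ lie in $D$. Hence the vertex of $D$ that dominates $o_j$ must be one of $v_{(1,j)}, \ldots, v_{(d,j)}$, which gives a vertex of the form $v_{(\cdot,j)}$ in $D$. Since $j$ was arbitrary, this conclusion holds for every $1 \le j \le b$, which is precisely the statement of the fact.

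There is essentially no serious obstacle here: the argument is a one-line neighbourhood count once the adjacencies are written down. The only point requiring care is verifying that the outer-to-inner edges are index-preserving, that is, that $o_j$ connects only to the $v_{(i,j)}$ and never to $v_{(i,j')}$ for $j' \neq j$. It is exactly this structural feature that prevents a single inner vertex from simultaneously covering the outer requirements for two distinct indices, and so forces a separate $v_{(\cdot,j)}$ for each $j$.
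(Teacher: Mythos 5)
Your proof is correct and follows exactly the paper's own reasoning: the paper justifies this fact by noting that $o_j$ can be dominated only by another outer vertex or by some $v_{(i,j)}$, $1 \le i \le d$, so excluding all outer vertices forces a vertex of the form $v_{(\cdot,j)}$ into the set. You simply spell out the neighbourhood enumeration (the clique edges within $C_0$ and the index-preserving edges $\{o_j, v_{(i,j)}\}$) more explicitly than the paper does.
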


\begin{lemma}\label{lemma-minimal}
For each graph $G_{(d,b)}$ as defined above, $\Gamma(G_{(d,b)}) = d+b-2$.
\end{lemma}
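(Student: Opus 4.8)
The plan is to establish $\Gamma(G_{(d,b)}) = d+b-2$ by proving two bounds: exhibiting a minimal dominating set of size $d+b-2$ (lower bound) and showing no minimal dominating set can be larger (upper bound).

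First I would construct an explicit minimal dominating set achieving size $d+b-2$. The natural candidate uses the structure of the inner cliques: I would include one vertex from each inner clique $C_i$ that dominates a distinct outer vertex, then supplement to cover the remaining outer vertices. Concretely, for each inner clique $C_i$ (with $1 \le i \le d$) pick the vertex $v_{(i,1)}$, which dominates both the rest of $C_i$ and the outer vertex $o_1$; this uses $d$ vertices and covers $o_1$. The remaining outer vertices $o_2, \ldots, o_b$ (there are $b-1$ of them) still need domination, but each $o_j$ for $j \ge 2$ can be dominated by some $v_{(i,j)}$. By reusing, say, clique $C_1$ cleverly I can cover some for free; more carefully, I expect the optimal construction to take all $d$ inner-clique representatives plus $b-2$ additional vertices to handle the outer vertices, and then verify via Fact~\ref{fact-deletable} that every chosen vertex retains a private neighbour so the set is genuinely minimal. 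The delicate part of the construction is arranging private neighbours so that no vertex is deletable while keeping the total at exactly $d+b-2$.

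For the upper bound, I would argue that any minimal dominating set $S$ has $|S| \le d+b-2$. The key counting idea is that each inner clique $C_i$ contributes a bounded number of vertices to a minimal dominating set: if $S$ contained two vertices of some clique $C_i$, minimality (Fact~\ref{fact-deletable}) would force each to have a private neighbour, and I would show the limited external adjacency of inner-clique vertices (each $v_{(i,j)}$ is adjacent only to its clique and to $o_j$) makes this impossible beyond a certain point. I would split into cases according to how many outer vertices $S$ contains, invoking Fact~\ref{obs-cover-inner} and Fact~\ref{obs-cover-outer} to control the inner-clique contributions, and then sum over the $d+1$ cliques to obtain the bound.

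The main obstacle I anticipate is the upper bound, specifically the careful case analysis balancing how vertices are shared between the outer clique $C_0$ and the inner cliques. Because an outer vertex $o_j$ can serve as the private-neighbour witness for at most one inner-clique vertex $v_{(i,j)}$ (and vice versa), the private-neighbour accounting couples the cliques together, so I cannot treat them independently. The crux will be showing that these sharing constraints cap the total at exactly $d+b-2$ rather than something larger; I expect to formalize this as a counting argument that every vertex in a minimal $S$ consumes a distinct private neighbour drawn from a pool whose size is pinned down by the graph's structure.
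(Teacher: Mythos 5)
Your lower-bound construction has a genuine gap. You commit to a minimal dominating set consisting of ``all $d$ inner-clique representatives $v_{(i,1)}$ plus $b-2$ additional vertices,'' but no such set exists when $b \ge 4$. The additions must take care of the $b-1$ outer vertices $o_2,\dots,o_b$, and an inner vertex $v_{(i,j)}$ dominates only the single outer vertex $o_j$; so $b-2$ inner additions cannot cover all $b-1$ of them, forcing at least one outer vertex into the set. But once some outer vertex $o_{j_0}$ is in $S$, every added inner vertex $v_{(i,j)}$ with $j \ge 2$ has a neighbour in $S$ (namely $v_{(i,1)}$) and no private neighbour: its clique neighbours are also dominated by $v_{(i,1)}$, and $o_j$ is also dominated by $o_{j_0}$. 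By Fact~\ref{fact-deletable} it is deletable, contradicting minimality; a second outer vertex is deletable for the same reason. Consequently, every minimal dominating set containing all $d$ column-1 representatives has size exactly $d+1$, which equals $d+b-2$ only in the boundary case $b=3$. The paper's construction escapes this trap precisely by dropping $v_{(1,1)}$: it takes $\{v_{(1,j)} \mid 2 \le j \le b\} \cup \{v_{(i,1)} \mid 2 \le i \le d\}$, so the set contains no outer vertex and each column $j \ge 2$ is hit exactly once, leaving $o_j$ as a private neighbour of $v_{(1,j)}$ and the vertices of $C_i$ as private neighbours of $v_{(i,1)}$.

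Your upper-bound plan also differs from the paper's and is underdeveloped, though it is salvageable. The paper proves non-minimality of any dominating set of size at least $d+b-1$ by explicitly exhibiting a proper subset that still dominates (an outer vertex plus one vertex per inner clique if an outer vertex is present; otherwise $b$ column-representatives spanning two cliques plus one vertex from each of the remaining $d-2$ cliques). Your private-neighbour counting can be pushed through, but the key structural fact you would need is missing: if an inner clique contributes two or more vertices to a minimal set, then (as above) the set can contain no outer vertex and each such vertex must be the \emph{unique} set vertex in its column, with $o_j$ as its forced private neighbour; this ``column monopolization'' is what caps the total at $d+b-2$. Moreover, your blanket claim that ``every vertex in a minimal $S$ consumes a distinct private neighbour'' is false in general: by Fact~\ref{fact-deletable}, a vertex with no neighbour in $S$ is non-deletable even without a private neighbour (as happens for the vertices $v_{(i,1)}$, $i \ge 2$, in the paper's extremal set), so these vertices must be accounted for separately in any counting argument.
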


\begin{proof}
We first demonstrate that there is a minimal dominating set of size
$d+b-2$, consisting of $\{v_{(1,j)} \mid 2 \le j \le b\} \cup
\{v_{(i,1)} \mid 2 \le i \le d\}$; the first set dominates $b-1$ of
the outer vertices and the first inner clique and the second set
dominates $o_1$ and the rest of the inner cliques. The dominating set
is minimal, as the removal of any vertex $v_{(1,j)}$, $2 \le j \le b$,
would leave vertex $o_j$ with no neighbour in the dominating set and
the removal of any $v_{(i,1)}$, $2 \le i \le d$, would leave
$\{v_{(i,j)} \mid 1 \le j \le b\}$ with no neighbour in the dominating
set.

By Fact~\ref{obs-cover-inner}, any dominating set that does not
contain all outer vertices must contain at least one vertex in each of the $d$
inner cliques.  Since the outer vertices form a minimal dominating
set, any other minimal dominating set must contain at least one vertex
from each of the inner cliques.

We now consider any dominating set $S$ of size at least $d+b-1$
containing one vertex for each inner clique and show that it is not
minimal.  If $S$ contains at least one outer vertex, we can find a
smaller dominating set by removing all but the outer vertex and one
vertex for each inner clique, yielding a total of $d + 1 < d + b - 1$
vertices (since $b \ge 3$). Now suppose that $S$ consists entirely of inner vertices;
by Fact~\ref{obs-cover-outer}, $S$ contains at least one
vertex of the form $v_{(\cdot,j)}$ for each $1 \le j \le b$.
Moreover, for at least one value $1 \le j' \le b$, there exists more than
one vertex of the form $v_{(\cdot,j')}$ as $d+b-1 > b$.  This allows us
to choose $b$ vertices of the form $v_{(\cdot,j)}$ for each $1 \le j
\le b$ that dominate at least two inner cliques as well as all outer
vertices.  By selecting one member of $S$ from each of the remaining
$d-2$ inner cliques, we form a dominating
set of size $d + b -2 < d + b -1$, proving that $S$ is not minimal.
\qed
\end{proof}

\begin{theorem}\label{th2}
There exists an infinite family of graphs such that for each $G$ in the family, $D_{\Gamma(G) + 1}(G)$ is not connected.
\end{theorem}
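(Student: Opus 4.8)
The plan is to exhibit the infinite family $\{G_{(d,b)}\}$ constructed above (for all $d \ge 2$ and $b \ge 3$) and to show that $D_{\Gamma(G_{(d,b)})+1}(G_{(d,b)})$ is disconnected by isolating one particular dominating set that cannot be reconfigured to any other. By Lemma~\ref{lemma-minimal}, $\Gamma(G_{(d,b)}) = d+b-2$, so the relevant reconfiguration graph is $D_{d+b-1}(G_{(d,b)})$; every vertex of this graph is a dominating set of cardinality at most $d+b-1$. First I would single out the all-inner dominating set $A = \{v_{(i,j)} \mid 1 \le i \le d,\ 1 \le j \le b\}$ restricted so that it contains exactly one vertex from each inner clique while covering every outer vertex---concretely, a set using one $v_{(\cdot,j)}$ for each $1 \le j \le b$ together with a choice for the remaining inner cliques, of size exactly $d$ (for $d \ge b$) or carefully chosen to sit at the cardinality threshold. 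The goal is to show this vertex is isolated, i.e.\ has no neighbour in $D_{d+b-1}(G_{(d,b)})$.

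The key structural observation is that a minimal dominating set consisting entirely of inner vertices is "locked": by Fact~\ref{obs-cover-inner} any dominating set avoiding all outer vertices needs at least one vertex per inner clique, and by Fact~\ref{obs-cover-outer} it needs a vertex $v_{(\cdot,j)}$ for each colour $1 \le j \le b$ to cover the outer vertices. The next step is to verify that from such a minimal all-inner set $A$ of size $d$, no single reconfiguration step (addition or deletion of one vertex) stays within the budget $k = d+b-1$ while remaining a dominating set. Deletion is impossible because $A$ is minimal (every vertex has a private neighbour, so by Fact~\ref{fact-deletable} none is deletable). The only permitted moves are therefore additions, and I would argue that to escape toward any genuinely different configuration---in particular toward a set containing an outer vertex---one must first add vertices, and the counting shows that the capacity $d+b-1$ is too tight to both maintain domination of all inner cliques and simultaneously build up the outer-vertex structure needed to later delete an inner vertex. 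The crux is to show $A$ is a connected component on its own, meaning even additions cannot eventually free up a deletion without at some point violating either the size bound or the dominating property.

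More precisely, I would take $A$ to be a minimal dominating set of size exactly $k-1 = d+b-2$ using only inner vertices (of the form given in Lemma~\ref{lemma-minimal} but translated to live entirely among inner cliques), so that exactly one more vertex may be added before the budget is exhausted. After any single addition we obtain a set of size $d+b-1$; I would then show that this enlarged set admits no deletion that produces a \emph{new} dominating set other than reversing the addition. The counting argument parallels those in Theorem~\ref{th_domi_inde_01}: I track how many inner cliques are covered, how many outer vertices are covered by inner vertices, and show that with only $d+b-1$ vertices every inner vertex still retains a private neighbour (either an outer vertex it uniquely covers or its own inner clique), so no inner vertex becomes deletable, and any added vertex is the unique reverse move. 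This traps $A$ in a component of size at most two, disconnected from the large component containing $V(G) \setminus W$-style spread-out dominating sets.

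The main obstacle I anticipate is the bookkeeping in the last step: carefully choosing $A$ so that it sits at the correct cardinality relative to the threshold $k = \Gamma+1$, and then ruling out \emph{all} escape routes rather than just the obvious ones. In particular, I must handle the case where an addition of an outer vertex $o_{j'}$ momentarily creates redundancy that could allow deleting an inner vertex $v_{(\cdot,j')}$; I would show that doing so would leave some inner clique or some other outer vertex undominated, because a single added outer vertex only relieves one colour class while $A$ relies on its inner vertices across all $b \ge 3$ colours and all $d$ cliques simultaneously. Establishing this rigidity for every $b \ge 3$ and $d \ge 2$, and confirming that at least one \emph{other} dominating set of size $\le d+b-1$ exists (so that $D_{\Gamma+1}$ genuinely has more than one component), completes the argument and yields the claimed infinite family.
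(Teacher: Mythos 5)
Your overall strategy---pinning the disconnection on the tension between Fact~\ref{obs-cover-inner}, Fact~\ref{obs-cover-outer}, and the budget $k=\Gamma+1=d+b-1$---is in the right spirit, but the set you chose to ``freeze'' is the wrong one, and the key claim you need about it is false. You assert that from a minimal all-inner dominating set $A$ of size $d+b-2$, any single addition yields a set admitting no deletion other than reversing that addition (``no inner vertex becomes deletable''). Concretely this fails already in $G_{(2,3)}$: take $A=\{v_{(1,2)},v_{(1,3)},v_{(2,1)}\}$ (the minimal set from Lemma~\ref{lemma-minimal}), add $v_{(2,2)}$ to reach size $4=k$, and then delete $v_{(1,2)}$. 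The result $\{v_{(1,3)},v_{(2,1)},v_{(2,2)}\}$ is a dominating set: $o_1$ is dominated by $v_{(2,1)}$, $o_2$ by $v_{(2,2)}$, $o_3$ by $v_{(1,3)}$, clique $C_1$ by $v_{(1,3)}$, and clique $C_2$ by $v_{(2,1)}$. The addition of $v_{(2,2)}$ destroyed the private neighbour $o_2$ of $v_{(1,2)}$, so $v_{(1,2)}$ became deletable. Such swaps exist for all $d\ge 2$, $b\ge 3$: the all-inner sets form a large, richly connected piece of $D_{\Gamma+1}(G_{(d,b)})$, so $A$ is nowhere near isolated, and your ``component of size at most two'' conclusion cannot be rescued by more careful bookkeeping.

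The set that really is frozen is the \emph{outer clique} $\{o_1,\dots,o_b\}$, and this is what the paper uses. By Fact~\ref{obs-cover-inner}, a dominating set that is about to lose an outer vertex (i.e., the set obtained after the deletion does not contain all outer vertices) must contain at least one vertex from each of the $d$ inner cliques; so just before any such deletion the set has size at least $b+d=\Gamma+2>k$. Hence, starting from the outer clique, no deletion is ever possible at any point of any reconfiguration sequence: every reachable set is a superset of the outer clique, and in particular no all-inner set (your $A$, or the paper's $B$) is reachable. If you prefer to argue from the inner side, the correct statement is not that $A$ is isolated, but that no set reachable from $A$ can ever contain all $b$ outer vertices (the step adding the last outer vertex would require size $d+b-1$ beforehand by Fact~\ref{obs-cover-inner}, hence size $d+b>k$ afterwards); either direction gives two components, but both require the budget argument at the critical step rather than a local ``nothing is deletable'' claim about $A$.
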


\begin{proof}
For any positive integers $b \ge 3$ and $d \ge 2$, we 
show that there is no path between dominating sets $A$ to $B$ in $D_{d+b-1}(G_{(d,b)})$, where $A$ consists of the vertices in
the outer clique and $B$ consists of $\{v_{(i,\ell)} \mid 1 \le i \le
d, 1 \le \ell \le b, i \equiv \ell \mbox{ (mod $b$)} \}$; 

By Fact~\ref{obs-cover-inner}, before we can delete any of the vertices in $A$,
we need to add one vertex from each of the inner cliques,
resulting in a dominating set of size $d+b = \Gamma(G_{(d,b)}) + 2$.
As there is no such vertex in our graph, there is no way to connect $A$ and $B$.
\qed
\end{proof}

Each graph $G_{(d,b)}$ constructed for Theorem~\ref{th2} is a $b$-partite graph; we
can partition the vertices into $b$ independent sets, where the $j$th set, $1 \le j
\le b$ is defined as $\{v_{(i,j)} \mid 1 \le i \le d\} \cup \{o_{i}  \mid 1 \le i \le d, i \equiv j+1
\mbox{ (mod $b$)} \}$.  Moreover, we can form a tree decomposition of 
width $2b - 1$ of $G_{(d,b)}$, for all positive integers $b \ge 3$ and $d \ge b$,  
by creating bags with the vertices of the inner cliques and
adding all outer vertices to each bag.

\begin{corollary}\label{cor-treewidth}
For every positive integer $b \ge 3$, there 
exists an infinite family of graphs of tree-width $2b - 1$ such that for each $G$ in the family, 
$D_{\Gamma(G)+1}(G)$ is not connected, and an infinite family of $b$-partite 
graphs such that for each $G$ in the family, $D_{\Gamma(G)+1}(G)$ is not connected.
\end{corollary}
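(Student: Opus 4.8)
The plan is to establish both families at once, since each is a subfamily of the graphs $G_{(d,b)}$ already built for Theorem~\ref{th2}. Fix $b \ge 3$. For the $b$-partite statement I would take $\{G_{(d,b)} : d \ge 2\}$, and for the tree-width statement $\{G_{(d,b)} : d \ge b\}$; both are infinite because $|V(G_{(d,b)})| = b(d+1)$ is strictly increasing in $d$, so the members are pairwise non-isomorphic. Disconnectedness of $D_{\Gamma(G_{(d,b)})+1}(G_{(d,b)})$ comes for free from Theorem~\ref{th2}, which applies to every $b \ge 3$ and $d \ge 2$. Thus the entire task reduces to verifying the two structural claims asserted just above the corollary: that each $G_{(d,b)}$ is $b$-partite, and that it admits a tree decomposition of width $2b-1$.

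For $b$-partiteness I would check that the explicit partition $V_1,\dots,V_b$ is a proper colouring, i.e.\ that each class is independent. The inner vertices of a fixed class $V_j$ are the position-$j$ vertices $v_{(i,j)}$ taken across all inner cliques; these are pairwise non-adjacent because the only edges leaving an inner clique are the matching edges to $C_0$, so distinct inner cliques induce no edges between them. The single outer vertex placed in $V_j$ is $o_{j'}$ with $j' \equiv j+1 \pmod b$; by the matching edges $\{o_{j'},v_{(i,j')}\}$ it is adjacent, among inner vertices, only to those in position $j'$, and since $b \ge 3$ we have $j+1 \not\equiv j \pmod b$, so $j' \neq j$. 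Hence $o_{j'}$ is non-adjacent to every (position-$j$) inner vertex of $V_j$, and it is the unique outer vertex of $V_j$. Finally I would confirm that the classes are disjoint and cover $V(G_{(d,b)})$, completing the $b$-partite claim.

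For the tree-width bound I would verify the three tree-decomposition axioms for the stated decomposition, whose bags are $C_i \cup C_0$, one per inner clique $C_i$, joined in a path. Node coverage is immediate (every inner vertex lies in its own bag, every outer vertex in all bags). For edge coverage, the internal edges of $C_i$ together with the matching edges $\{o_j,v_{(i,j)}\}$ are all contained in the bag $C_i \cup C_0$, while the edges of the outer clique lie in every bag. For the running-intersection property, each inner vertex occurs in exactly one bag and each outer vertex occurs in all of them, so in either case the bags containing a given vertex form a connected subtree. Since every bag has exactly $2b$ vertices, the width is $2b-1$, giving $\mathrm{tw}(G_{(d,b)}) \le 2b-1$.

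The main obstacle lies in the tree-width step rather than the partition. The running-intersection axiom is the usual place such arguments break, so I would take care to confirm that no inner vertex is forced to appear in two bags; this holds precisely because inner cliques are mutually non-adjacent and communicate only through $C_0$, which sits in every bag. A second and more delicate point is tightness: the construction yields only the upper bound $\mathrm{tw}(G_{(d,b)}) \le 2b-1$, and deciding whether this is the exact tree-width (rather than something strictly smaller) would require a matching lower bound, say via a bramble or a forbidden-minor argument. Since the corollary only needs \emph{bounded} tree-width in order to exhibit disconnected $D_{\Gamma(G)+1}(G)$, I would rest the proof on this upper bound and treat exactness as a separate question.
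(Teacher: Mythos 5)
Your proposal is correct and follows essentially the same route as the paper, which justifies the corollary via the remarks immediately preceding it: the explicit $b$-partition of $G_{(d,b)}$ and the tree decomposition whose bags are each inner clique together with all outer vertices, combined with Theorem~\ref{th2}. Your verification of the partition and the decomposition axioms simply makes the paper's one-paragraph justification explicit, and your caveat that the construction only yields the upper bound $\mathrm{tw}(G_{(d,b)}) \le 2b-1$ (which suffices for the corollary) applies equally to the paper, which likewise proves no matching lower bound.
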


Theorem~\ref{th2} does not preclude the possibility that when restricted to planar
graphs or any other graph class that excludes $G_{(d,b)}$, $D_{\Gamma(G)+1}(G)$ is connected. 
However, the next corollary follows directly from the fact that
$G_{(2,3)}$ is planar (part (b) of Figure~\ref{fig1}).

\begin{corollary}\label{cor-planar}
There exists a planar graph $G$ for which $D_{\Gamma(G)+1}(G)$ is not connected.
\end{corollary}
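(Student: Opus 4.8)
The plan is to instantiate the construction underlying Theorem~\ref{th2} at its smallest admissible parameters, $d = 2$ and $b = 3$, and then certify that the resulting nine-vertex graph $G_{(2,3)}$ is planar. Since $d = 2 \ge 2$ and $b = 3 \ge 3$ satisfy the hypotheses of Theorem~\ref{th2}, that theorem already guarantees that $D_{\Gamma(G_{(2,3)})+1}(G_{(2,3)})$ is disconnected; by Lemma~\ref{lemma-minimal} we have $\Gamma(G_{(2,3)}) = d + b - 2 = 3$, so the graph in question is $D_{4}(G_{(2,3)})$. The entire content of the corollary therefore reduces to exhibiting a plane embedding of $G_{(2,3)}$.

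First I would write $G_{(2,3)}$ out explicitly: the outer triangle $C_0 = \{o_1, o_2, o_3\}$, the two inner triangles $C_1 = \{v_{(1,1)}, v_{(1,2)}, v_{(1,3)}\}$ and $C_2 = \{v_{(2,1)}, v_{(2,2)}, v_{(2,3)}\}$, and the six spoke edges $\{o_j, v_{(i,j)}\}$ for $i \in \{1, 2\}$ and $1 \le j \le 3$. This gives nine vertices and fifteen edges, comfortably within the bound $3n - 6 = 21$ allowed for a planar graph on $n = 9$ vertices; of course this count is only a sanity check and does not by itself establish planarity.

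To settle planarity I would give an explicit drawing rather than invoke Kuratowski's criterion, since a concrete embedding is both shorter and is precisely what part (b) of Figure~\ref{fig1} depicts. The key observation is that $G_{(2,3)}$ consists of two triangular prisms $K_3 \times K_2$ sharing the common triangular face $C_0$: the subgraph on $C_0 \cup C_1$ with the spokes $\{o_j, v_{(1,j)}\}$ is one prism, and the subgraph on $C_0 \cup C_2$ with the spokes $\{o_j, v_{(2,j)}\}$ is the other. I would draw $C_0$ as a triangle, place $C_1$ strictly inside it with each $v_{(1,j)}$ next to $o_j$ so that the three inner spokes are non-crossing chords (the standard planar prism), and then place $C_2$ strictly outside $C_0$, routing each spoke $\{o_j, v_{(2,j)}\}$ without crossing. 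Because $C_0$ separates the plane into an inside and an outside and one prism is drawn entirely in each region, no two edges meet except at shared vertices, so the drawing is plane and $G_{(2,3)}$ is planar.

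There is no genuine obstacle here: the disconnectivity is inherited wholesale from Theorem~\ref{th2}, so the only step needing care is the embedding, and even that is routine once one recognizes the ``two prisms glued along a triangle'' structure. The single point worth double-checking is that $d = 2$ and $b = 3$ really lie in the admissible range of Theorem~\ref{th2}, since only then does the invoked disconnectivity result apply to $G_{(2,3)}$.
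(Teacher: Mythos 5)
Your proposal is correct and matches the paper's approach exactly: the paper likewise observes that Corollary~\ref{cor-planar} follows directly from Theorem~\ref{th2} applied to $G_{(2,3)}$ together with the planarity of $G_{(2,3)}$ (exhibited in part (b) of Figure~\ref{fig1}). Your explicit ``two prisms glued along the triangle $C_0$, one drawn inside and one outside'' embedding simply spells out what the paper delegates to the figure.
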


\section{On the diameter of $D_k(G)$}\label{sec-diameter}
In this section, we obtain a lower bound on the diameter of the
$k$-dominating graph of a family of graphs $G_n$.  We describe $G_n$ in terms of
several component subgraphs, each playing a role in forcing
the reconfiguration of dominating sets.


A {\em linkage gadget} (part (a), Figure~\ref{figlink}) consists of five vertices,
the {\em external vertices} (or endpoints) $e_1$ and $e_2$,
and the {\em internal vertices} $i_1$, $i_2$, and $i_3$. The external vertices are adjacent to
each internal vertex as well as to each other; the following results from
the internal vertices having degree two:

\begin{fact}\label{obs-linkage}
  In a linkage gadget, the minimum dominating sets of size one are
  $\{e_1\}$ and $\{e_2\}$. Any dominating set containing an internal
  vertex must contain at least two vertices.  Any dominating set in a
  graph containing $m$ vertex-disjoint linkage gadgets with all internal vertices
  having degree exactly two must contain at least one vertex in each linkage gadget.
\end{fact}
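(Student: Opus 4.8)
The plan is to read off the adjacency structure of the gadget directly from its definition and then verify the three assertions in turn, each reducing to a short neighbourhood argument. From the definition, each external vertex $e_1, e_2$ is adjacent to all four of the remaining vertices of the gadget, while each internal vertex $i_1, i_2, i_3$ is adjacent only to $e_1$ and $e_2$; in particular the three internal vertices are pairwise non-adjacent.

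For the first assertion, I would observe that a single vertex $v$ forms a dominating set of the gadget if and only if $v$ is adjacent to every other vertex. Both $e_1$ and $e_2$ have this property, so $\{e_1\}$ and $\{e_2\}$ are dominating sets. On the other hand, any internal vertex $i_j$ is non-adjacent to the other two internal vertices, so $\{i_j\}$ leaves them undominated; hence $\{e_1\}$ and $\{e_2\}$ are the only size-one dominating sets. The second assertion is then immediate: if a dominating set $S$ contains an internal vertex and $|S|=1$, then $S=\{i_j\}$ for some $j$, contradicting the first assertion, so $|S|\ge 2$.

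For the third assertion, I would lean on the hypothesis that every internal vertex has degree exactly two in the ambient graph, so that its only neighbours are the two external vertices $e_1, e_2$ of its own gadget. Fixing any one gadget and any of its internal vertices $i$, the fact that $S$ is dominating means either $i\in S$ or some neighbour of $i$ lies in $S$; in every case the witnessing vertex ($i$, $e_1$, or $e_2$) belongs to that gadget. Since the $m$ gadgets are vertex-disjoint, this forces at least one vertex of $S$ inside each one, which is the claim.

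The argument is almost entirely routine, and the only point requiring care is the degree-exactly-two condition in the third part: it is precisely what guarantees that an internal vertex cannot be dominated by a vertex lying outside its own gadget. Without this hypothesis an internal vertex could have further neighbours, and the per-gadget count would no longer follow, so I would make sure to invoke it explicitly rather than merely referring to the local picture of a single gadget.
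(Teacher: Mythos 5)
Your proof is correct and matches the paper's (implicit) reasoning exactly: the paper states this as a Fact, justified only by the preceding remark that it ``results from the internal vertices having degree two,'' and your neighbourhood-by-neighbourhood verification is precisely the elaboration of that remark. You rightly identify the degree-two hypothesis as the crux of the third assertion, which is the same point the paper is emphasizing.
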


\begin{figure}[h]
\begin{centering}
\centerline{\includegraphics[scale=0.5]{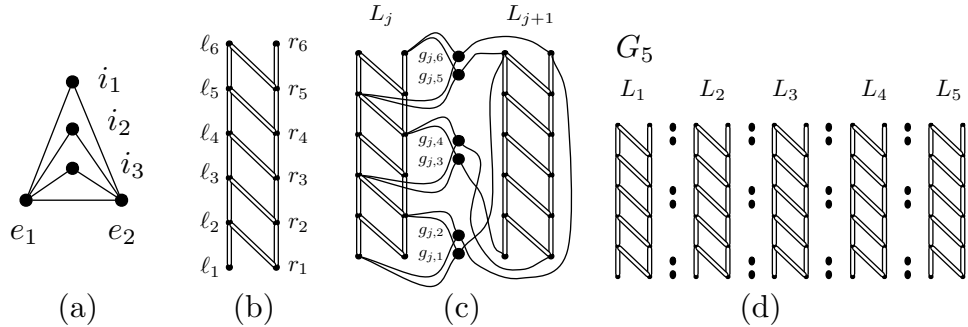}}
\end{centering}
\caption{Parts of the construction}
\label{figlink}
\end{figure}

A {\em ladder} (part (b) of Figure~\ref{figlink}, linkages shown as
double edges) is a graph consisting of twelve {\em ladder vertices}
paired into six {\em rungs}, where rung $i$ consists of the vertices
$\ell_i$ and $r_i$ for $1 \le i \le 6$, as well as the $45$ internal
vertices of fifteen linkage gadgets.  Each linkage gadget is
associated with a pair of ladder vertices, where the ladder vertices
are the external vertices in the linkage gadget.  The fifteen pairs
are as follows: ten {\em vertical pairs} $\{\ell_i,\ell_{i+1}\}$ and
$\{r_i,r_{i+1}\}$ for $1 \le i \le 5$, and five {\em cross pairs}
$\{\ell_{i+1},r_i\}$ for $1 \le i \le 5$.  For convenience, we refer
to vertices $\ell_i$, $1 \le i \le 6$ and the associated linkage
gadgets as the {\em left side of the ladder} and to vertices $r_i$, $1
\le i \le 6$ and the associated linkage gadgets as the {\em right side
  of the ladder}, or collectively as the {\em sides of the ladder}.

The graph $G_n$ consists of $n$ ladders $L_1$ through $L_n$ and $n-1$
sets of {\em gluing vertices}, where each set consists of
three {\em clusters} of two vertices each. For $\ell_{j,i}$ and $r_{j,i}$, $1
\le i \le 6$, the ladder vertices of ladder $L_j$, and 
$g_{j,1}$ through $g_{j,6}$ the gluing vertices that join
ladders $L_j$ and $L_{j+1}$, we have the following connections 
for $1 \le j \le n - 1$:

\begin{itemize}
\item{} Edges connecting the {\em bottom cluster} to the bottom two rungs of ladder $L_j$ and the top rung of ladder $L_{j+1}$: 
$\{\ell_{j,1}, g_{j,1}\}$, $\{\ell_{j,1}, g_{j,2}\}$, $\{r_{j,2}, g_{j,1}\}$, 
$\{r_{j,2}, g_{j,2}\}$, $\{\ell_{j+1,6}, g_{j,1}\}$, $\{r_{j+1,6}, g_{j,2}\}$

\item{} Edges connecting the {\em middle cluster} to the middle two rungs of ladder $L_j$ and the bottom rung of ladder $L_{j+1}$: 
$\{\ell_{j,3}, g_{j,3}\}$, $\{\ell_{j,3}, g_{j,4}\}$, $\{r_{j,4}, g_{j,3}\}$, 
$\{r_{j,4}, g_{j,4}\}$, $\{\ell_{j+1,1}, g_{j,3}\}$, $\{r_{j+1,1}, g_{j,4}\}$

\item{} Edges connecting the {\em top cluster} to the top two rungs of ladder $L_j$ and the top rung of ladder $L_{j+1}$: 
$\{\ell_{j,5}, g_{j,5}\}$, $\{\ell_{j,5}, g_{j,6}\}$, $\{r_{j,6}, g_{j,5}\}$, 
$\{r_{j,6}, g_{j,6}\}$, $\{\ell_{j+1,6}, g_{j,5}\}$, $\{r_{j+1,6}, g_{j,6}\}$
\end{itemize}

Figure~\ref{figlink} parts (c) and (d) show details of the
construction of $G_n$; they depict, respectively, two consecutive
ladders and $G_5$, both with linkages represented as double
edges. When clear from context, we sometimes use single subscripts
instead of double subscripts to refer to the vertices of a single
ladder.

We let ${\cal D} = \{\{\ell_{(j,2i-1)},\ell_{(j,2i)}\},\{r_{(j,2i-1)},$ $r_{(j,2i)}\} \mid 1
\le i \le 3, 1 \le j \le n\}$ denote a set of $6n$ pairs in $G_n$; the
corresponding linkage gadgets are vertex-disjoint.
Then Fact~\ref{obs-linkage} implies the following:

\begin{fact}\label{obs-ladder-six}
Any dominating set $S$ of $G_n$ must contain at least one vertex of
each of the linkage gadgets for vertical pairs in the set ${\cal D}$
and hence is of size at least $6n$; if $S$ contains an internal vertex, then $|S| > 6n$.
\end{fact}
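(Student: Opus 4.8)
The plan is to derive every part of Fact~\ref{obs-ladder-six} from Fact~\ref{obs-linkage} applied to the collection of $6n$ linkage gadgets whose external pairs lie in ${\cal D}$. First I would record the two structural facts about these gadgets that the construction guarantees. They are pairwise vertex-disjoint: within each ladder the six pairs $\{\ell_{2i-1},\ell_{2i}\}$ and $\{r_{2i-1},r_{2i}\}$ for $1 \le i \le 3$ partition the twelve ladder vertices, so the external vertices of distinct gadgets are disjoint, and distinct gadgets have disjoint internal vertices by construction. Moreover, every internal vertex still has degree exactly two in the whole of $G_n$. This last point is the one to check carefully, since $G_n$ is built by adding gluing vertices: but each gluing edge is incident only to ladder vertices, which serve as external vertices of the gadgets, so no internal vertex ever gains a neighbour. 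With these two observations the third sentence of Fact~\ref{obs-linkage}, taken with $m = 6n$, immediately gives that any dominating set $S$ of $G_n$ contains a vertex of each of the $6n$ gadgets; since the gadgets are vertex-disjoint this yields $|S| \ge 6n$.

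For the strict bound I would show that containing an internal vertex always costs one vertex beyond the $6n$ already forced. Suppose $S$ contains an internal vertex $i$, lying in some gadget $H$, and split into two cases according to whether the external pair of $H$ belongs to ${\cal D}$. If it does not (so $H$ is a cross-pair gadget or an overlapping vertical-pair gadget), then $i$, being internal, is shared with none of the $6n$ gadgets of ${\cal D}$; it is therefore a vertex of $S$ not counted among the at-least-$6n$ vertices forced above, so $|S| \ge 6n + 1$. If the pair of $H$ does belong to ${\cal D}$, then the second sentence of Fact~\ref{obs-linkage} forces a second vertex of $H$ into $S$: the two internal vertices of $H$ other than $i$ have degree two, so each is either in $S$ or dominated by an external vertex of $H$, whence $|S \cap H| \ge 2$. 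Combining $|S \cap H| \ge 2$ with $|S \cap H'| \ge 1$ over the remaining $6n-1$ gadgets again gives $|S| \ge 6n + 1$. In both cases $|S| > 6n$.

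The one genuine obstacle here is bookkeeping rather than any new idea: I must be certain that the vertex witnessing the strict inequality is not silently double-counted against the $6n$ forced vertices. This is precisely why the case split on membership in ${\cal D}$ is needed and why the disjointness of internal vertices across all fifteen gadgets of every ladder is invoked. Once those disjointness relations are pinned down, both the bound $|S|\ge 6n$ and its strict refinement are direct consequences of Fact~\ref{obs-linkage}.
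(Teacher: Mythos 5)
Your proof is correct and takes essentially the same route as the paper, which states this fact without an explicit proof as a direct consequence of Fact~\ref{obs-linkage} applied to the $6n$ vertex-disjoint gadgets corresponding to ${\cal D}$. The details you supply---checking that the gluing edges leave every internal vertex with degree exactly two in $G_n$, and the case split on whether the internal vertex's gadget lies in ${\cal D}$ for the strict inequality---are precisely the bookkeeping the paper leaves implicit.
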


Choosing an arbitrary external vertex for each vertical pair does not
guarantee that all vertices on the side of a ladder are
dominated; for example, the set $\{\ell_i \mid i \in \{1, 4, 5\}\}$
does not dominate the internal vertices in the vertical pair
$\{\ell_2,\ell_3\}$.  Choices that do not leave such gaps form the set ${\cal C} =
\{{\cal C}_i \mid 1 \le i \le 4\}$ where ${\cal C}_1 = \{1, 3, 5\}$, ${\cal C}_2 = \{2, 3, 5\}$,
${\cal C}_3 = \{2, 4, 5\}$, and ${\cal C}_4 = \{2, 4, 6\}$.

\begin{fact}\label{obs-minimum}
In any dominating set $S$ of size $6n$ and in any ladder $L$ in $G_n$,
the restriction of $S$ to $L$ must be of the form $S_i$ for some $1
\le i \le 7$, as illustrated in Figure~\ref{fig4}.
\end{fact}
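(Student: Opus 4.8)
The plan is to leverage the tightness of the size bound in Fact~\ref{obs-ladder-six}. Within each ladder, the set ${\cal D}$ contributes six vertex-disjoint linkage gadgets (the three left and three right vertical pairs), so $G_n$ contains $6n$ such gadgets in total, and these pairs partition the $12$ ladder vertices of each ladder. Since $|S| = 6n$, $S$ must contain exactly one external vertex from each of these $6n$ gadgets and, by the second assertion of Fact~\ref{obs-ladder-six}, no internal vertex whatsoever. Restricting attention to a fixed ladder $L$, this forces $S$ to pick exactly one vertex from each of $\{\ell_1,\ell_2\}$, $\{\ell_3,\ell_4\}$, $\{\ell_5,\ell_6\}$ and one from each of $\{r_1,r_2\}$, $\{r_3,r_4\}$, $\{r_5,r_6\}$ --- six vertices in all, which is precisely the budget $S$ can afford to spend inside $L$.

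Next I would extract the domination constraints coming from the remaining gadgets of $L$: the even vertical pairs $\{\ell_2,\ell_3\}$, $\{\ell_4,\ell_5\}$, $\{r_2,r_3\}$, $\{r_4,r_5\}$ and the five cross pairs $\{\ell_{i+1},r_i\}$ for $1 \le i \le 5$. Each of these gadgets has internal vertices of degree two lying outside $S$; by Fact~\ref{obs-linkage} such a vertex is adjacent only to the two external vertices of its gadget, so it can be dominated only if $S$ contains at least one of those two external vertices. This yields one \emph{covering inequality} per non-${\cal D}$ gadget, asserting that $S$ meets each such pair.

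I would then separate the two sides. Writing a left selection as a triple $(a,b,c)$ with $a \in \{1,2\}$, $b \in \{3,4\}$, $c \in \{5,6\}$, the two even-pair inequalities ($S$ meets $\{\ell_2,\ell_3\}$ and $\{\ell_4,\ell_5\}$) eliminate four of the eight triples, leaving exactly the index sets ${\cal C}_1 = \{1,3,5\}$, ${\cal C}_2 = \{2,3,5\}$, ${\cal C}_3 = \{2,4,5\}$, and ${\cal C}_4 = \{2,4,6\}$; the symmetric statement holds on the right side. It then remains to impose the five cross-pair inequalities, which couple a left choice ${\cal C}_p$ with a right choice ${\cal C}_q$, and to check them against all $4 \times 4 = 16$ combinations. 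I expect exactly seven combinations to survive (those in which the right choice lags the left by zero or one step), and I would identify these seven with the configurations $S_1,\ldots,S_7$ of Figure~\ref{fig4}.

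The main obstacle is this final bookkeeping --- confirming that precisely seven of the sixteen $({\cal C}_p,{\cal C}_q)$ pairs clear all five cross-pair inequalities. The cleanest route is to record, for each index $1 \le j \le 6$, which of ${\cal C}_1,\ldots,{\cal C}_4$ contain $j$; every cross-pair condition ``$(i{+}1) \in L$ or $i \in R$'' then reduces to a table lookup, and the surviving pairs form a transparent staircase $q \in \{p-1,p\}$. One should also note that this argument delivers only the necessary direction claimed by the fact: any size-$6n$ dominating set restricts to one of the $S_i$, so no realizability of all seven forms (which could be further restricted by the gluing vertices) need be established here.
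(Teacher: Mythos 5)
Your proposal is correct and follows essentially the same route as the paper's proof: use the tightness of Fact~\ref{obs-ladder-six} to force exactly one external vertex per vertical pair of ${\cal D}$ (and nothing else), note that the remaining even vertical pairs restrict each side to one of ${\cal C}_1,\ldots,{\cal C}_4$, and then let the cross pairs cut the $16$ left--right combinations down to the seven staircase pairs $({\cal C}_p,{\cal C}_q)$ with $q \in \{p-1,p\}$, which are exactly $S_1,\ldots,S_7$. The only difference is one of explicitness --- the paper compresses the side restriction and the final check into two sentences settled ``by inspection,'' whereas you spell out the covering inequalities and the enumeration --- and your closing remark that only the necessity direction is claimed is accurate.
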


\begin{figure}[h]
\begin{centering}
\centerline{\includegraphics[scale=0.3]{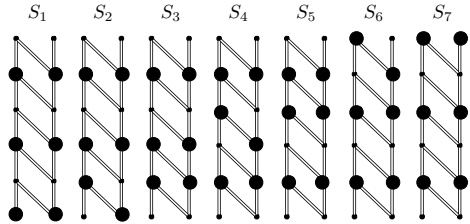}}
\end{centering}
\caption{Minimum dominating sets for $G_1$}
\label{fig4}
\end{figure}

\begin{proof}
Fact~\ref{obs-ladder-six} implies that the only choices for the
left (right) vertices are $\{\ell_i \mid i \in {\cal C}_j\}$ ($\{r_i \mid i
\in {\cal C}_j\}$) for $1 \le j \le 4$.  The sets $S_i$,
$ 1 \le i \le 7$, are the only combinations of these choices that
dominate all the internal vertices in the cross pairs.
\qed
\end{proof}

\noindent We say that ladder $L_j$ {\em is in
 state $S_i$} if the restriction of the dominating set to
$L_j$ is of the form $S_i$, for $1 \le j \le n$ and $1 \le i \le 7$.

The exponential lower bound in
Theorem~\ref{theorem-diameter} is based on counting how many times
each ladder is modified from $S_1$ to $S_7$ or vice versa; we say
ladder $L_j$ undergoes a {\em switch} for each such modification.
We first focus on a single ladder.

\begin{fact}\label{obs-deletable}
  For $S$ a dominating set of $G_1$, a vertex $v \in S$ is deletable
  if and only if either $v$ is the internal vertex of a linkage gadget
  one of whose external vertices is in $S$, or for every linkage
  gadget containing $v$ as an external vertex, either the other
  external vertex is also in $S$ or all internal vertices are in $S$.
\end{fact}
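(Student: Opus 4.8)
The plan is to invoke Fact~\ref{fact-deletable}, which reduces the deletability of $v$ to two requirements: $v$ must have at least one neighbour in $S$, and $v$ must have no private neighbour. I would then split into two cases according to whether $v$ is an internal or an external vertex of a linkage gadget, exploiting the property (Fact~\ref{obs-linkage}) that in $G_1$ every internal vertex has degree exactly two, so its only neighbours are the two external vertices of its gadget. The two disjuncts in the statement correspond exactly to these two cases.

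For the internal case, the neighbours of $v$ are precisely the two externals $e_1,e_2$ of its gadget. If neither lies in $S$, then $v$ has no neighbour in $S$ and is not deletable, matching the failure of the first disjunct. If, say, $e_1\in S$, then $v$ has a neighbour in $S$; the only candidate private neighbour is $e_2$ when $e_2\notin S$, but then $e_2$ is dominated by $e_1$ since the two externals are mutually adjacent, so $v$ has no private neighbour and is deletable. Thus an internal $v$ is deletable iff one of its externals is in $S$, which is the first disjunct.

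For the external case, the neighbours of $v$ are the internal vertices and the other external vertices of every gadget in which $v$ is external, and the crux is to characterise the private neighbours. Because an internal vertex $x$ of a gadget $\{v,e'\}$ has only $v$ and $e'$ as neighbours, an $x\notin S$ is dominated solely by $v$ exactly when $e'\notin S$; hence this gadget contributes a private neighbour of $v$ iff $e'\notin S$ and not all of its internal vertices lie in $S$. The other external $e'$ is never a problematic private neighbour, since if $e'\notin S$ while all internal vertices of the gadget are in $S$, then $e'$ is dominated by one of those internal vertices. Taking these observations over all gadgets containing $v$ shows that $v$ has no private neighbour precisely when, for every such gadget, the other external is in $S$ or all internal vertices are in $S$ — the second disjunct. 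Since every ladder vertex lies in at least one vertical-pair gadget, this condition also forces $v$ to have a neighbour in $S$, so by Fact~\ref{fact-deletable} it is equivalent to deletability.

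The hard part will be the external case: one must rule out the \emph{other} external vertices as private neighbours and correctly isolate the two distinct ways in which a gadget can fail to contribute a private neighbour. The degree-two property of internal vertices and the mutual adjacency of the externals are exactly what make this argument clean, so I would state these structural facts explicitly before running the counting of private neighbours over the gadgets incident to $v$.
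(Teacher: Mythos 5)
Your proof is correct and is essentially the argument the paper intends: the paper states this Fact without an explicit proof, presenting it as an immediate consequence of Fact~\ref{fact-deletable} together with the structure of linkage gadgets (externals mutually adjacent, internals of degree two), which is precisely the case analysis you carry out, including the key observation that the other external vertex can never be a private neighbour because it is dominated either by an internal vertex or by a vertex of $S$ in another gadget. One small remark: you implicitly read the second disjunct as applying only to external vertices, which is the intended reading --- as literally stated it is vacuously true for an internal vertex whose externals both lie outside $S$, a minor imprecision in the paper's formulation rather than a gap in your argument.
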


\begin{lemma}\label{lemma-tree}
In $D_{\gamma(G_1) + 1}(G_1)$ there is a single reconfiguration
sequence between $S_1$ and $S_7$, of length 12.
\end{lemma}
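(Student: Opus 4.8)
The plan is to first pin down the budget and the shape of every legal move, then reduce the whole question to a small adjacency analysis among the seven minimum dominating sets. Since Facts~\ref{obs-ladder-six} and~\ref{obs-minimum} give $\gamma(G_1)=6$ (any dominating set has size at least $6$, with equality realized exactly by $S_1,\dots,S_7$), we have $D_{\gamma(G_1)+1}(G_1)=D_7(G_1)$. The first step is to observe that any reconfiguration sequence in $D_7(G_1)$ strictly alternates between dominating sets of size $6$ and size $7$: each $S_i$ is a minimum, hence minimal, dominating set, so no vertex can be deleted from a size-$6$ set while preserving domination, forcing an addition; and from a size-$7$ set the budget forbids further additions, forcing a deletion. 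Thus every sequence has the form $S_{a_0},X_1,S_{a_1},X_2,\dots$ with the $S_{a_t}$ of size $6$ and the $X_t$ of size $7$.

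Next I would identify the size-$6$ ``milestones'' and the atomic hop between consecutive ones. By Fact~\ref{obs-minimum} each milestone is one of $S_1,\dots,S_7$, and by Fact~\ref{obs-ladder-six} each contains exactly one external vertex from each of the six vertex-disjoint gadgets of $\mathcal{D}$ and no internal vertices. A single hop $S_{a_{t-1}}\to S_{a_t}$ adds one vertex and then deletes a different one, so its net effect removes one vertex and adds one; for the result again to be a valid size-$6$ state the removed and added vertices must lie in the \emph{same} gadget of $\mathcal{D}$ and both be external. Adding an internal vertex, or adding into a gadget different from the one emptied, is a dead end: the only deletion that restores a dominating set of size $6$ simply undoes the addition. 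This reduces everything to single-gadget external-vertex swaps.

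I would then determine, using the admissible side-patterns $\mathcal{C}_1,\dots,\mathcal{C}_4$ together with Fact~\ref{obs-deletable}, exactly which swaps are realizable. A swap toggles one coordinate of the left or right side-pattern; the outcome must again be an admissible pattern, and the old external vertex must be deletable in the intermediate size-$7$ set. A direct check shows that from $\mathcal{C}_a$ the only side-patterns reachable by one toggle are $\mathcal{C}_{a\pm1}$ (every other toggle leaves an internal vertex of a non-$\mathcal{D}$ vertical pair undominated, so the pattern is inadmissible), while the cross-pair gadgets force the two sides to advance in lockstep. Writing the seven states as $(\mathcal{C}_l,\mathcal{C}_r)$, namely $(1,1),(2,1),(2,2),(3,2),(3,3),(4,3),(4,4)$, this yields a milestone adjacency graph that is exactly the simple path $S_1-S_2-\cdots-S_7$; moreover each hop's intermediate size-$7$ set is uniquely determined and admits no productive deletion other than the intended one.

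Since the milestone graph is a simple path on the seven states and each hop is forced to consist of precisely two reconfiguration steps, the shortest (and only non-redundant) reconfiguration from $S_1$ to $S_7$ must traverse $S_1,S_2,\dots,S_7$ in order, giving length $2\cdot 6=12$. I expect the main obstacle to be the last bookkeeping step: running the deletability test of Fact~\ref{obs-deletable} at every intermediate configuration to rule out all shortcuts and spurious branches, in particular confirming that the cross-pair gadgets genuinely block any attempt to push one side two gadgets ahead of the other, so that no sequence of length less than $12$ and no alternative length-$12$ sequence exists.
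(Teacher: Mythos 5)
Your proposal is correct and takes essentially the same route as the paper: both arguments reduce the lemma to showing that any reconfiguration sequence must alternate between the seven size-6 milestones $S_1,\dots,S_7$ and size-7 intermediates, and that only consecutive pairs $S_i,S_{i+1}$ can share a size-7 neighbour, so the milestone graph is the path $S_1-S_2-\cdots-S_7$ and the unique sequence has length $2\cdot 6=12$. The only difference is cosmetic: you establish the no-shortcut claim by analysing single-gadget external-vertex swaps (which makes the deletability check automatic, since the result of each deletion is itself one of the $S_j$), whereas the paper runs a case analysis on size-7 sets via Fact~\ref{obs-deletable} and unions of the sets in $\mathcal{C}$.
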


\begin{proof}
We define $P$ to be the path in the graph corresponding to the
reconfiguration sequence  $S_1 \leftrightarrow S_1 \cup
\{\ell_2\} \leftrightarrow S_2 \leftrightarrow S_2 \cup \{r_2\}
\leftrightarrow S_3 \leftrightarrow S_3 \cup \{\ell_4\}
\leftrightarrow S_4 \leftrightarrow S_4 \cup \{r_4\} \leftrightarrow
S_5 \leftrightarrow S_5 \cup \{\ell_6\} \leftrightarrow S_6
\leftrightarrow S_6 \cup \{r_6\} \leftrightarrow S_7$ and demonstrate
that there is no shorter path between $S_1$ and $S_7$.

By Facts~\ref{obs-minimum} and
\ref{obs-ladder-six}, $G_1$ has exactly seven dominating
sets of size six, and
any dominating set $S$ of size seven contains two vertices
from one vertical pair $d$ in ${\cal D}$ and one from each of the remaining five.
The neighbours of $S$ in $D_{\gamma(G_1)+1}(G_1)$ are the
vertices corresponding to the sets $S_i$, $1 \le i \le 7$,
obtained by deleting a single vertex of $S$.  The number of
neighbours is thus at most two, depending on which, if any,
vertices in $d$ are deletable.

If at least one of the vertices of $S$ in $d$ is an internal vertex,
then at most one vertex satisfies the first condition in
Fact~\ref{obs-deletable}.  Thus, for $S$ to have two neighbours, there must be a ladder vertex
that satisfies the second condition of
Fact~\ref{obs-deletable}, which by inspection of
Figure~\ref{fig4} can be seen to be false.

If instead $d$ contains two ladder vertices, in order for $S$ to have
two neighbours, the four ladder vertices on the side containing $d$
must correspond to the union of two of the sets in ${\cal C}$.  There are
only three such unions, ${\cal C}_1 \cup {\cal C}_2$, ${\cal C}_2 \cup {\cal C}_3$,
and ${\cal C}_3 \cup {\cal C}_4$, which implies that the only pairs with common neighbours
are $\{S_i,S_{i+1}\}$ for $1 \le i \le 6$, as needed to complete the
proof.
\qed
\end{proof}

For $n>2$, we cannot reconfigure ladders independently from each other,
as we need to ensure that all gluing vertices are dominated.  For
consecutive ladders $L_j$ and $L_{j+1}$, any cluster that is not
dominated by $L_j$ must be dominated by $L_{j+1}$; the bottom, middle,
and top clusters are not dominated by any vertex in $S_2$, $S_4$, and
$S_6$, respectively.

\begin{fact}\label{obs-two-ladders}
In any dominating set $S$ of $G_n$, for any $1 \le j < n$,
if $L_j$ is in state $S_2$, then $L_{j+1}$ is in state $S_7$;
if $L_j$ is in state $S_4$, then $L_{j+1}$ is in state $S_1$; and
if $L_j$ is in state $S_6$, then $L_{j+1}$ is in state $S_7$.
\end{fact}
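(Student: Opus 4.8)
The plan is to prove Fact~\ref{obs-two-ladders} by examining, for each of the three relevant ladder states, which gluing clusters are left undominated, and then using the edge structure to force the state of the adjacent ladder $L_{j+1}$. The key observation I would rely on is the one stated in the paragraph immediately preceding the fact: in any dominating set $S$ of $G_n$ that respects the minimum-size constraint, every gluing vertex must be dominated, and a gluing cluster whose incident ladder-$L_j$ vertices all lie outside $S$ must instead be dominated from the $L_{j+1}$ side. So the first step is to make precise, for each state $S_i$, exactly which rungs of $L_j$ contribute vertices to $S$.

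First I would fix the correspondence between the states $S_2, S_4, S_6$ and the sets ${\cal C}_j$ governing which rung indices are occupied, reading this off from Figure~\ref{fig4}. Using the edge lists defining the bottom, middle, and top clusters, I would verify that the bottom cluster $\{g_{j,1}, g_{j,2}\}$ is adjacent (on the $L_j$ side) only to $\ell_{j,1}, r_{j,2}$; the middle cluster to $\ell_{j,3}, r_{j,4}$; and the top cluster to $\ell_{j,5}, r_{j,6}$. I would then check that state $S_2$ contains none of $\{\ell_{j,1}, r_{j,2}\}$, so the bottom cluster is undominated by $L_j$; similarly $S_4$ leaves the middle cluster undominated and $S_6$ leaves the top cluster undominated, exactly as asserted. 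This is the step where the explicit combinatorial data of the construction must be consulted carefully, since I need to confirm that the other two clusters \emph{are} dominated by $L_j$ in each case (otherwise more than one cluster would impose constraints).

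Next I would turn to the forcing direction. For each undominated cluster, the defining edges give the unique $L_{j+1}$ vertices that can dominate it: the bottom cluster connects to $\ell_{j+1,6}, r_{j+1,6}$ (top rung of $L_{j+1}$); the middle cluster to $\ell_{j+1,1}, r_{j+1,1}$ (bottom rung); and the top cluster again to $\ell_{j+1,6}, r_{j+1,6}$ (top rung). Since $L_{j+1}$ is itself in one of the seven states of Fact~\ref{obs-minimum}, I would consult Figure~\ref{fig4} to identify which state places vertices on the required rung. For the $S_2$ and $S_6$ cases, domination of the cluster requires a vertex on the top rung (index $6$) of $L_{j+1}$, and ${\cal C}_4 = \{2,4,6\}$ is the only choice containing index $6$; the unique state realizing this on both sides is $S_7$. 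For the $S_4$ case, domination requires a vertex on the bottom rung (index $1$), forcing a state drawn from ${\cal C}_1 = \{1,3,5\}$, and the unique such state is $S_1$.

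The main obstacle I anticipate is not conceptual but bookkeeping: the indices in the ladder run from $1$ to $6$ while the gluing edges mix the top and bottom rungs of $L_j$ with the top or bottom rung of $L_{j+1}$ in an asymmetric way (note in particular that both the bottom and top clusters reach the \emph{top} rung of $L_{j+1}$, which is why both $S_2$ and $S_6$ force $S_7$ rather than distinct states). I would therefore lay out a small table pairing each state $S_i$ with its occupied rung indices on the left and right sides, cross-reference it against the cluster edge lists, and read off the forced state directly, taking care that the state labels in Figure~\ref{fig4} are matched to the correct ${\cal C}_j$ on each side.
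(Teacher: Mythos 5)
Your proposal is correct and takes essentially the same approach as the paper, whose entire justification for this fact is the paragraph immediately preceding it: the bottom, middle, and top clusters are left undominated by $L_j$ in states $S_2$, $S_4$, $S_6$ respectively, and since the two vertices of each such cluster attach one to each side of $L_{j+1}$ (at rung $6$ for the bottom and top clusters, rung $1$ for the middle cluster), both sides of $L_{j+1}$ are forced into ${\cal C}_4$ (resp.\ ${\cal C}_1$), i.e., state $S_7$ (resp.\ $S_1$). Your index bookkeeping (cluster-to-rung adjacencies and the state/${\cal C}_j$ correspondence) checks out against the construction, including the observation that both $S_2$ and $S_6$ force the same state $S_7$ because both clusters reach the top rung of $L_{j+1}$.
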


\begin{lemma}\label{lemma-switch}
For any reconfiguration sequence in which $L_j$ and $L_{j+1}$ are
initially both in state $S_1$,
if $L_j$ undergoes $p$ switches then $L_{j+1}$ must undergo at
least $2p + 1$ switches.
\end{lemma}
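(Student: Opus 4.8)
The plan is to show that each individual switch of $L_j$ forces $L_{j+1}$ to realize the ordered triple of states $S_7, S_1, S_7$, and then to concatenate these forced patterns over all $p$ switches and count the unavoidable state changes of $L_{j+1}$.

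First I would analyze a single switch of $L_j$. By Lemma~\ref{lemma-tree}, the only reconfiguration of a ladder between $S_1$ and $S_7$ proceeds monotonically through $S_1, S_2, \ldots, S_7$, interleaved with transitional sets of size $6n+1$; hence any switch of $L_j$, in either direction, passes through the clean states $S_2$, $S_4$, and $S_6$, in the order $S_2, S_4, S_6$ for a forward switch and $S_6, S_4, S_2$ for a backward switch. Applying Fact~\ref{obs-two-ladders} at each of these three moments, when $L_j$ is in $S_2$ or $S_6$ the ladder $L_{j+1}$ must be in state $S_7$, and when $L_j$ is in $S_4$ it must be in state $S_1$. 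Since both the forward order $(S_2, S_4, S_6)$ and the backward order $(S_6, S_4, S_2)$ induce the same sequence of required states for $L_{j+1}$, every switch of $L_j$ forces $L_{j+1}$ to pass through $S_7$, then $S_1$, then $S_7$, independent of the switch's direction.

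Next I would concatenate these forced patterns. Since $L_{j+1}$ begins in $S_1$, over the course of $L_j$'s $p$ switches the ladder $L_{j+1}$ is forced to realize, in order, the state $S_1$ followed by $p$ blocks of the form $S_7, S_1, S_7$. Each transition between a required $S_1$ and a required $S_7$ cannot occur without at least one switch of $L_{j+1}$, by the definition of a switch together with the monotone path of Lemma~\ref{lemma-tree}; hence the number of switches of $L_{j+1}$ is at least the number of alternations between $S_1$ and $S_7$ in this forced sequence, since any additional switches only increase the count. Counting the alternations, the first block contributes three (namely $S_1 \to S_7 \to S_1 \to S_7$), while each subsequent block contributes only two, as its leading $S_7$ coincides with the trailing $S_7$ of the previous block. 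This yields $3 + 2(p-1) = 2p+1$ forced switches of $L_{j+1}$, as claimed.

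The main obstacle I anticipate lies in making the first step fully rigorous: establishing that within $G_n$, and not merely the isolated ladder $G_1$ of Lemma~\ref{lemma-tree}, a switch of $L_j$ genuinely forces the ladder to visit the clean states $S_2$, $S_4$, $S_6$ in sequence. This rests on the budget of $D_{\gamma(G_n)+1}(G_n)$ permitting at most one extra vertex across the entire graph, so that at most one ladder can occupy a transitional configuration at any step and each ladder's restriction must therefore evolve along the unique monotone path of Lemma~\ref{lemma-tree}; care is needed to confirm that the gluing vertices grant no additional freedom that would allow $L_j$ to bypass any of $S_2$, $S_4$, or $S_6$, or to reach $S_7$ from $S_1$ without attaining each clean intermediate state.
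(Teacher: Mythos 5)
Your proposal is correct and is essentially the paper's own argument: both apply Fact~\ref{obs-two-ladders} at the moments $L_j$ passes through $S_2$, $S_4$, $S_6$ to force $L_{j+1}$ through the pattern $S_7, S_1, S_7$, and both count $3$ forced switches of $L_{j+1}$ for the first switch of $L_j$ plus $2$ for each subsequent one (since consecutive blocks share an $S_7$), giving $3 + 2(p-1) = 2p+1$. The obstacle you flag---that Lemma~\ref{lemma-tree} is proved only for the isolated ladder $G_1$ and must be shown to persist inside $G_n$---is a genuine gap, but the paper elides it in the same way, deferring the role of gluing vertices to the proof of Theorem~\ref{theorem-diameter}.
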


\begin{proof}
We use a simple counting argument. When $p = 1$, the result
follows immediately from Fact~\ref{obs-two-ladders} since
$L_j$ can only reach state $S_7$ if $L_{j+1}$ is reconfigured
from $S_1$ to $S_7$ to $S_1$ and finally back to $S_7$.
After the first switch of $L_j$, both ladders are in state $S_7$.

For any subsequent switch of $L_j$, $L_j$ starts in state $S_7$
because for $L_j$ to reach $S_1$ from $S_2$ or to reach $S_7$ from
$S_6$, by Fact~\ref{obs-two-ladders} $L_{j+1}$ must have been in
$S_7$.  Since by definition $L_j$ starts in $S_1$ or $S_7$, to enable
$L_j$ to undergo a switch, $L_{j+1}$ will have to undergo at least two
switches, namely $S_7$ to $S_1$ and back to $S_7$.  \qed
\end{proof}

\begin{theorem}\label{theorem-diameter}
For $S$ a dominating set of $G_n$ such that every ladder of $G_n$ is
in state $S_1$ and $T$ a dominating set of $G_n$ such that
every ladder of $G_n$ is in state $S_7$, the length of any
reconfiguration sequence between $S$ and $T$ is at least $12(2^{n+1} -
n - 2)$.
\end{theorem}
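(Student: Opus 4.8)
The plan is to reduce the entire reconfiguration to a count of \emph{switches} and then invoke the two preceding lemmas as black boxes. First I would observe that, since every ladder begins in state $S_1$ and must end in state $S_7$, each ladder $L_j$ necessarily undergoes an odd number of switches, and in particular at least one; let $p_j$ denote the actual number of switches that $L_j$ performs along a fixed but arbitrary reconfiguration sequence from $S$ to $T$. This gives the base case $p_1 \ge 1$. Because all ladders start in $S_1$, the hypothesis of Lemma~\ref{lemma-switch} is satisfied for every consecutive pair $L_j, L_{j+1}$, yielding the recurrence $p_{j+1} \ge 2 p_j + 1$.

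Next I would solve this recurrence to obtain a closed-form lower bound. Since the map $p \mapsto 2p + 1$ is monotone increasing, a lower bound on $p_j$ propagates to $p_{j+1}$, and the sequence defined by $s_1 = 1$ and $s_{j+1} = 2 s_j + 1$ satisfies $s_j = 2^j - 1$ by an easy induction. Hence $p_j \ge 2^j - 1$ for every $1 \le j \le n$, and summing the resulting geometric series gives the total switch count
\[
\sum_{j=1}^{n} p_j \;\ge\; \sum_{j=1}^{n}\bigl(2^j - 1\bigr) \;=\; \bigl(2^{n+1} - 2\bigr) - n \;=\; 2^{n+1} - n - 2 .
\]

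Finally I would convert switches into reconfiguration steps using Lemma~\ref{lemma-tree}. The key point is that each reconfiguration step adds or deletes a single vertex, and each such vertex belongs to exactly one ladder, so the steps of the global sequence can be partitioned by ladder; the steps attributable to $L_j$ must realise all $p_j$ of its switches. Since Lemma~\ref{lemma-tree} shows that within a single ladder the states $S_1, \dots, S_7$ together with their size-seven intermediates form a path of length $12$, every individual switch (an end-to-end traversal $S_1 \leftrightarrow S_7$) costs at least $12$ steps, so $L_j$ contributes at least $12 p_j$ steps; any partial wiggles of the ladder only add more. Summing over the disjoint per-ladder step sets yields a total length of at least $12 \sum_{j=1}^n p_j \ge 12(2^{n+1} - n - 2)$, as required.

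The step I expect to be the main obstacle is this last conversion. One must argue carefully that Lemma~\ref{lemma-tree}, established for the isolated ladder $G_1$, still forces $12$ steps per switch inside $G_n$ — that the presence of neighbouring ladders and gluing vertices creates no shortcut through the local state graph of a single ladder — and that the $12 p_j$ bounds for distinct ladders genuinely add, since no single reconfiguration step can simultaneously advance the switch of two ladders. Everything else is the routine induction and summation sketched above.
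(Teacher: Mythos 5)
Your proposal is correct and follows essentially the same route as the paper's proof: a lower bound of $2^j-1$ switches for ladder $L_j$, obtained by iterating Lemma~\ref{lemma-switch} from the base case of one switch for $L_1$, converted into steps at a cost of $12$ per switch via Lemma~\ref{lemma-tree} and summed over ladders. The single point you flag as the main obstacle — that gluing vertices cannot create shortcuts through a ladder's local state graph — is exactly the point the paper dispatches first, by observing that any dominating set containing a gluing vertex has degree at most one in the $k$-dominating graph (deleting any other vertex would leave fewer than $6n$ vertices in the vertical-pair linkage gadgets, violating Fact~\ref{obs-ladder-six}), so such sets are dead ends that can only lengthen a reconfiguration sequence.
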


\begin{proof}
We first observe that Lemma~\ref{lemma-tree} implies that the
switch of any ladder requires at least twelve reconfiguration
steps; since the vertex associated with a dominating set containing a gluing vertex will have
degree at most one in the $k$-dominating graph, there are no shortcuts formed.

To reconfigure from $S$ to $T$, ladder $L_1$ must undergo at least one
switch. By Lemma~\ref{lemma-switch}, ladder $L_2$ will undergo at least $3 = 2^2-1$
switches, hence $2^j-1$ switches for ladder $L_j$, $1 \le j \le n$.
Since each switch requires twelve steps, the total number
of steps is thus at least
 $12 \sum_{i = 1}^{n}{(2^i - 1)}  =  12(2^{n+1} - n - 2)$.
\qed
\end{proof}

\begin{corollary}
There exists an infinite family of graphs such that for each graph $G_n$
in the family, $D_{\gamma(G_n) + 1}(G_n)$ has diameter $\Omega(2^n)$.
\end{corollary}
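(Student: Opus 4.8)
The plan is to derive the corollary directly from Theorem~\ref{theorem-diameter} applied to the family $\{G_n\}$ constructed in this section. The key realization is that the corollary is essentially a repackaging of the theorem, once I pin down $\gamma(G_n)$ and confirm that the two extremal dominating sets named in the theorem actually live in $D_{\gamma(G_n)+1}(G_n)$.

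First I would establish $\gamma(G_n) = 6n$. The lower bound is immediate from Fact~\ref{obs-ladder-six}: every dominating set of $G_n$ contains at least one vertex from each of the $6n$ vertex-disjoint linkage gadgets associated with the vertical pairs in $\mathcal{D}$, so $|S| \ge 6n$ for any dominating set $S$. For the matching upper bound I would exhibit a dominating set of size exactly $6n$ --- for instance the set $S$ in which every ladder is in state $S_1$ --- using Fact~\ref{obs-minimum} to confirm that such size-$6n$ dominating sets exist and that each ladder contributes exactly six vertices. Hence $\gamma(G_n) = 6n$ and $D_{\gamma(G_n)+1}(G_n) = D_{6n+1}(G_n)$, which is precisely the $k$-dominating graph in which Theorem~\ref{theorem-diameter} operates.

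Next I would observe that the dominating sets $S$ (every ladder in state $S_1$) and $T$ (every ladder in state $S_7$) each have cardinality $6n \le 6n+1$, so both are vertices of $D_{\gamma(G_n)+1}(G_n)$, and that they are connected in this graph, so that their distance is finite and the diameter bound is non-vacuous. Theorem~\ref{theorem-diameter} then guarantees that every reconfiguration sequence between $S$ and $T$ has length at least $12(2^{n+1} - n - 2)$; since the diameter of a graph is the maximum over all pairs of vertices of the shortest-path distance, the existence of this single pair forces the diameter of $D_{\gamma(G_n)+1}(G_n)$ to be at least $12(2^{n+1} - n - 2) = 24 \cdot 2^n - 12n - 24 = \Omega(2^n)$.

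The main obstacle is not analytic but bookkeeping: I must be careful that $S$ and $T$ are genuinely dominating sets of the whole $G_n$ (the gluing vertices must be dominated, not merely the vertices internal to each ladder) and that they sit inside $D_{\gamma(G_n)+1}(G_n)$ with the correct size. All the real work --- the exponential counting of switches via Lemma~\ref{lemma-switch} and the per-switch cost of twelve steps via Lemma~\ref{lemma-tree} --- has already been carried out in Theorem~\ref{theorem-diameter}, so the corollary reduces to the asymptotic simplification of $12(2^{n+1}-n-2)$ together with the standard observation that a lower bound on the distance between one pair of vertices is a lower bound on the diameter.
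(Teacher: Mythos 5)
Your proposal is correct and matches the paper's (implicit) argument: the paper states this corollary without proof precisely because it follows from Theorem~\ref{theorem-diameter} exactly as you describe, once one notes that $\gamma(G_n)=6n$ (lower bound from Fact~\ref{obs-ladder-six}, upper bound from the all-$S_1$ dominating set) so that the theorem's pair $S$, $T$ lives in $D_{\gamma(G_n)+1}(G_n)$ and forces diameter at least $12(2^{n+1}-n-2)=\Omega(2^n)$. Your bookkeeping steps (verifying the gluing vertices are dominated, and that the pair $S$, $T$ either is joined by a path or else the diameter is unbounded) are exactly the right details to check and raise no issues.
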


\section{Conclusions and future work}
In answering Haas and Seyffarth's question concerning the connectivity
of $D_{k}(G)$ for general graphs and $k = \Gamma(G) + 1$, we have
demonstrated infinite families of planar, bounded treewidth, and
$b$-partite graphs for which the $k$-dominating graph is not
connected. It remains to be seen whether $k$-dominating graphs
are connected for graphs more general than 
non-trivially bipartite graphs or chordal graphs, 
and whether $D_{\Gamma(G)+2}(G)$ is connected for all graphs. 
It would also be useful to know if there 
is a value of $k$ for which $D_{k}(G)$ is
guaranteed not to have exponential diameter.
Interestingly, for our connectivity and diameter examples, incrementing the
size of the sets by one is sufficient to break the proofs.

\bibliography{shortrefs}

\end{document}